\newtheorem{theorem}{Theorem}
\newtheorem{lemma}[theorem]{Lemma}
\newcommand{\N}{{\mathbb{N}}}
\newcommand{\tuple}[1]{\langle #1 \rangle}
\newcommand{\limrun}[0]{\infty}
\newcommand{\emptyword}[0]{\lambda}
\newcommand{\trans}[1]{\mathchoice{\xrightarrow{#1}}{\xrightarrow{\smash{\lower1pt\hbox{$\scriptstyle #1$}}}}{\text{Error}}{\text{Error}}}
\newcommand{\occ}[2]{|#1|_{#2}}
\newcommand{\prefsel}[0]{\upharpoonright}
\newcommand{\noprefsel}[0]{\upharpoonright\mkern-4mu \upharpoonright}
\begin{document}

\title{Preservation of normality by non-oblivious group selection}
\author{Olivier Carton and Joseph Vandehey}

\date{\today}
\maketitle

\begin{abstract}
  We give two different proofs of the fact that non-oblivious selection via
  regular group sets preserves normality.  Non-oblivious here means that
  whether or not a symbol is selected can depend on the symbol itself. One
  proof relies on the incompressibility of normal sequences, the other on
  the use of augmented dynamical systems.
\end{abstract}

\section{Introduction}
  
An infinite sequence $x=x_1x_2x_3x_4\dots$ over a finite alphabet $A$ is said
to be normal if every finite word appears with the same limiting
frequency in~$x$ as every other finite word of the same length. (Fuller
definitions of terminology used in the introduction will be included in the
next section.)

D.D.~Wall \cite{Wall50} famously showed that if $x=x_1x_2x_3\dots$ is
normal, then $x_kx_{k+\ell}x_{k+2\ell}x_{k+3\ell}\dots$ is normal for any
$k,\ell \in \mathbb{N}$. In other words, selecting along an arithmetic
progression preserves normality. Furstenberg \cite{Furstenberg67}, as part
of his seminal paper on the theory of disjoint systems, gave another proof
of this fact. Kamae \cite{Kamae73} and Weiss \cite{weiss71}, using the
theory of disjoint systems as well, were able to characterize those
sequences of positive integers $i_1<i_2<i_3<\dots$ such that selection
along these sequences also preserves normality: in particular, these are
the deterministic sequences or, equivalently, sequences of Kamae entropy zero.

More generally, many mathematicians have studied (prefix) selection rules.
Let $A^*$ be the set of finite words over $A$ and let $L\subset A^*$. The
sequence obtained by \emph{oblivious selection} of~$x$ by~$L$ is $x \prefsel L
= x_{i_1}x_{i_2}x_{i_3} \cdots$, where $i_1,i_2,i_3,\ldots$ is the
enumeration in increasing order of all the integers~$i$ such that the
prefix $x_1 x_2 \cdots x_{i-1}$ belongs to~$L$.  This selection rule is
called \emph{oblivious} because the symbol~$x_i$ is not included in the
considered prefix.  If $L = A^*1$ is the set of words ending with a~$1$,
the sequence~$x \prefsel L$ is made of all symbols of~$x$ occurring after
a~$1$ in the same order as they occur in~$x$.

The examples above by Wall, Kamae, and Weiss are all examples of oblivious
selection rules, where $L$ consists of all words of certain fixed
lengths. However, far more intricate selection rules are possible. The
following theorem of Agafonov \cite{Agafonov68} states that normality is
preserved by oblivious selection of a regular language.
\begin{theorem}[Agafonov] \label{thm:agafonov}
  If the sequence $x \in A^\N$ is normal and $L \subset A^*$ is
  regular, then $x \prefsel L$ is also normal.
\end{theorem}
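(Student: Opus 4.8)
The plan is to prove Agafonov's theorem via the incompressibility characterization of normality: a sequence is normal if and only if it cannot be compressed by any finite-state gambler (equivalently, by any information-lossless finite-state transducer), in the sense that the compression ratio achieved along the sequence has liminf at least $1$ (measured in base $|A|$). I would first recall this equivalence and fix a DFA $\mathcal{A} = (Q, A, \delta, q_0, F)$ recognizing $L$. The goal is to show that if $y = x \prefsel L$ were \emph{not} normal, then some finite-state compressor would compress $y$, and from it one could build a finite-state compressor for $x$ itself, contradicting the normality of~$x$.

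\medskip

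The key construction is the following. Suppose $C$ is a finite-state compressor witnessing the non-normality of $y$, i.e.\ $C$ compresses $y$ below ratio~$1$ infinitely often. I would build a compressor $C'$ for $x$ that runs $\mathcal{A}$ on the prefix read so far to know, before reading each symbol $x_i$, whether $x_i$ is a selected position (i.e.\ whether $x_1\cdots x_{i-1} \in L$). On selected positions, $C'$ feeds $x_i$ into a copy of $C$ and emits whatever $C$ emits; on non-selected positions, $C'$ must still encode $x_i$, and here the natural move is to encode it using a second, fixed ``cheap'' compressor — but crucially the non-selected symbols of a normal sequence $x$ form, by a symmetric application of Agafonov to the complementary regular language, again something we expect to be normal, so they are \emph{incompressible} and cost essentially one base-$|A|$ digit each. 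The bookkeeping needs a way to interleave the two output streams so that $C'$ remains information-lossless and finite-state; this is standard (tag the two streams, or use the fact that the split of positions into selected/non-selected is itself finite-state-recognizable, so a decoder that also simulates $\mathcal{A}$ can demultiplex). If $C$ saves a linear amount on the selected subsequence and the non-selected part is encoded with no loss, then $C'$ saves a linear amount on $x$, contradicting incompressibility of $x$.

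\medskip

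The main obstacle — and the place where real care is needed — is making the interleaving genuinely information-lossless and finite-state while preserving the \emph{rate} of compression, i.e.\ ensuring that the savings from $C$ on the subsequence $y$ translate into savings on $x$ of the same linear order, with only an $o(n)$ overhead for synchronization tags. One must argue that the number of selected positions among $x_1\cdots x_n$ grows linearly in $n$ (this follows from normality of $x$: the set of prefixes in $L$, if $L$ is non-trivial, has a well-defined positive density because $L$ being regular means membership is a finite-state property and normal sequences hit each state-frequency with the expected limiting frequency) — otherwise compression of $y$ would be diluted and give nothing for $x$. Handling the degenerate cases where $L$ selects only finitely many positions, or where the relevant density is zero, has to be done separately but is easy. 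A second subtlety is that ``finite-state compressor'' must be taken in a form robust enough to absorb the product with $\mathcal{A}$ and the demultiplexing automaton; using the Lempel–Ziv-style information-lossless finite-state compressors, or equivalently finite-state martingales, both product constructions are routine, so I would set up the incompressibility characterization in whichever of these forms makes the product cleanest.

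\medskip

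I should note that the abstract promises a proof ``relies on the incompressibility of normal sequences,'' so the above is the intended route; an alternative I would keep in mind as a sanity check is the dynamical-systems proof (building a skew product / tower over the base system that tracks the automaton state and using unique ergodicity of the odometer-like extension), which the abstract also mentions, but I expect the incompressibility argument to be the shorter and more self-contained one for a write-up targeting the \emph{non-oblivious} generalization, since the automaton reading $x_i$ itself fits into the compressor picture with essentially no change.
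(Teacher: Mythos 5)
Your route is the right one and is, in spirit, the paper's own: the paper does not actually prove Theorem~\ref{thm:agafonov} (it cites Agafonov and the proof in the Becher--Carton survey), but its Section~\ref{sec:method1} proof of Theorem~\ref{thm:main} is explicitly modeled on that incompressibility argument, and your sketch has the same skeleton --- Huffman-type block compression of the selected stream (Lemma~\ref{lem:huffman}), verbatim copying of the unselected stream, tagged-block interleaving with $o(m)$ overhead, and a decoder that simulates $\mathcal{A}$ to demultiplex. Two remarks on where your sketch should be tightened. First, delete the appeal to ``a symmetric application of Agafonov to the complementary regular language'': as written it is circular (it invokes the theorem being proved), and it is also unnecessary --- you never need the unselected subsequence to be normal or incompressible, only that it is emitted at cost exactly one alphabet symbol per symbol, which verbatim copying gives for free; the entire saving comes from the selected stream. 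Second, the density claim is stronger than what you can (or need to) assert: a normal input does not obviously give a \emph{well-defined} limiting frequency for visits to final states, and you do not need one; what the argument needs is that the number of selected positions among $x_1\cdots x_n$ has positive $\liminf$ density, which is exactly the paper's Lemma~\ref{lem:nonzerofreq} (every state visited infinitely often in the run of a DFA on a normal sequence is visited with positive lower density), together with the separate easy treatment of the case where only finitely many positions are selected.

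It is also worth making explicit the point you gesture at with ``a decoder that also simulates $\mathcal{A}$ can demultiplex'': in the oblivious case this works because whether $x_i$ is selected depends only on $x_1\cdots x_{i-1}$, so the decoder, having reconstructed the prefix so far, knows from which stream to pull the next symbol; equivalently, $x \mapsto (x \prefsel L,\, x \prefsel (A^*\setminus L))$ is one-to-one with a finite-state inverse. The paper singles this out as the key ingredient of the incompressibility proof of Agafonov, and notes it \emph{fails} for non-oblivious selection even for group sets --- which is why its Theorem~\ref{thm:main} proof needs Lemma~\ref{lem:onetoone} and the extra state/buffer-length data written by $\mathcal{T}_4$. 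Spelling that injectivity out (rather than calling it ``standard'') is the one place where your write-up would otherwise be leaning on an unproved claim; with that and the two fixes above, your plan matches the standard proof.
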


A language $L\subset A^*$ is regular if it is accepted by a deterministic
finite automaton. We will speak more on this later.

Kamae and Weiss \cite{KamaeWeiss75} extended Theorem \ref{thm:agafonov}
slightly. Let $L$ be a set of words and let $\sim_L$ be an equivalence
relation given by $u \sim_L v$ if $\{w:uw\in L\}=\{w:vw\in L\}$. If
$L/\sim_L$ is finite, then selection along $L$ preserves normality. In
contrast, Merkle and Reimann \cite{MerkleReimann06} showed that selection
by deterministic one-counter languages or by linear languages need not
preserve normality.  We also mention that suffix selection, where the
selection of a given digit is based of the tail of the sequence after that
digit, has also been considered \cite{BecherCartonHeiber15}.

We can also define the sequence obtained by
\emph{non-oblivious selection} of $x$ by $L$. This is
$x \noprefsel L = x_{i_1}x_{i_2}x_{i_3} \cdots$, where $i_1,i_2,i_3,\ldots$
is the enumeration in increasing order of all the integers~$i$ such that
the prefix $x_1 x_2x_3 \cdots x_i$ \emph{including} $x_i$ belongs to~$L$. 

Non-oblivious selection is more powerful than oblivious selection, as it
can simulate the latter due to the following formula:
\begin{displaymath}
  x \noprefsel LA = x \prefsel L
\end{displaymath}
for any sequence~$x$ and any set~$L$ of words.  Let us recall that
$LA$ is the set of words of the form~$wa$ for $w \in L$ and $a \in A$. On
the other hand, there are weaknesses in non-oblivious selection as well. If
we take $L=A^*1$ again, then $x\noprefsel L$ will consist of nothing but
$1$'s, which will not be normal.

Accordingly, oblivious selection has been studied more than
non-oblivious selection. The second author \cite{Vandehey17a} has a few
(very specific) examples of non-oblivious selections that preserve
normality. In this paper, we present a more general theorem:

\begin{theorem} \label{thm:main}
  If the sequence $x \in A^\N$ is normal and $L \subset A^*$ is a
  regular group set, then $x \noprefsel L$ is also normal.
\end{theorem}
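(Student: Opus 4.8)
The plan is to prove the contrapositive via the classical equivalence between normality and finite-state incompressibility: a sequence over~$A$ is normal if and only if no information-lossless finite-state transducer compresses it, equivalently its finite-state dimension equals~$1$. Assume then that $y:=x\noprefsel L$ is not normal, and fix an information-lossless finite-state compressor $D\colon A^*\to\{0,1\}^*$ together with $\epsilon>0$ such that $|D(y_1\cdots y_m)|\le(1-\epsilon)\,m\log_2|A|$ for infinitely many~$m$. The goal is to manufacture from~$D$ an information-lossless finite-state compressor~$C$ that compresses~$x$ below ratio~$1$ along infinitely many prefixes, contradicting the normality of~$x$.

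The compressor~$C$ scans~$x$ while simulating a deterministic automaton recognising~$L$ and tracking its current state~$q$; at position~$i$ it knows whether $x_i$ is selected, namely whether the transition on~$x_i$ out of~$q$ lands in an accepting state. The selected symbols, read in order, form~$y$, and $C$ streams them through a private copy of~$D$; the remaining symbols are transmitted essentially verbatim. Before this can produce a genuine gain one needs the selected positions to have positive asymptotic density in~$x$ — a frequency statement in the spirit of Theorem~\ref{thm:agafonov}; for a regular group set this density equals $|F\cap Q'|/|Q'|>0$, where $F$ is the accepting set and $Q'$ is the set of states reachable from the initial one, on which the transition group acts transitively by the group hypothesis. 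The real issue is that $C$ must be information-lossless.

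Information-losslessness is the heart of the matter, and it is exactly here that the group hypothesis is used. A decoder re-running the automaton knows the current state but, unlike in the oblivious case, cannot in general decide whether the next symbol is selected without already knowing that symbol, so it is not obvious that the interleaving of the compressed selected block with the non-selected symbols can be undone at sub-linear cost — and a naive ``one bit of advice per symbol'' would wipe out the compression. The point is that for a group set every letter acts as a \emph{permutation} of the states: this rigidity lets one propagate the state constraints both forward and backward along the non-selected stretches, so that the automaton's run on~$x$ — hence $x$ itself — is reconstructible from the selected values, the non-selected values, and only a bounded amount of synchronisation data. Turning this into a genuine finite-state decoder and verifying that the synchronisation cost is absorbed by the surplus~$D$ produces on the selected block is the step I expect to be by far the most delicate, and it is the place where the regular-group-set hypothesis is needed in full force.

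The second proof replaces compression by augmented dynamical systems. One augments the full shift $(A^\N,\sigma,\mu)$, with $\mu$ the uniform Bernoulli measure, by the automaton: on $A^\N\times Q'$ take the skew product $T(w,p)=(\sigma w,\delta_{w_1}(p))$, which preserves $\nu=\mu\times u$ with $u$ uniform on~$Q'$. Then $x\noprefsel L$ is the itinerary of the observable $(w,p)\mapsto w_1$ along the orbit of $(x,q_0)$ under the first-return map $T_V$ to the clopen set $V=\{(w,p):\delta_{w_1}(p)\in F\}$. Normality of~$x$ makes $(x,q_0)$ generic for~$\nu$; one then has to check that the push-forward of the induced measure~$\nu_V$ under this itinerary map is again the uniform Bernoulli measure on~$A^\N$, for which the group hypothesis supplies the needed symmetry of the first-return combinatorics (and, concretely, $\nu(V)=|F\cap Q'|/|Q'|$). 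The delicate point here is opposite in flavour: the base system is very far from uniquely ergodic, so transferring genericity from $(A^\N\times Q',T,\nu)$ to $(V,T_V,\nu_V)$ across an unbounded return-time function must be carried out by hand rather than quoted from a standard theorem.
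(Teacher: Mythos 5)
Both halves of your proposal track the paper's own two proofs (incompressibility in Section~\ref{sec:method1}, augmented systems in Section~\ref{sec:method2}), but in each case the step you explicitly defer is where the entire content of the group hypothesis lives, so as written there is a genuine gap rather than a complete argument. For the compression route, the statement you need and do not prove is the paper's Lemma~\ref{lem:onetoone}: for the two-tape transducer built from a group automaton (each transition $p\trans{a}q$ emitting $a$ on tape~1 if $q\in F$ and on tape~2 otherwise), the map sending a finite run $p\trans{u|v,w}q$ to the triple $\tuple{q,v,w}$ is injective. Your intuition about propagating state constraints backward is exactly right, and the proof is a short backward induction: whether the terminal state $q$ is final tells you which tape received the last symbol $a$ of the run, and the permutation property gives a unique $r$ with $r\trans{a}q$, so one peels off the last transition and recurses. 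Two cautions. First, the end state $q$ genuinely must be part of the synchronisation data: the paper's example with $01^\N$ and $101^\N$ shows that the pair $(x\noprefsel L,\,x\noprefsel(A^*\setminus L))$ alone is not injective even for a group set. So ``a bounded amount of synchronisation data'' has to be made quantitative --- the paper writes the current state and the length of the pending unselected buffer once per output block of length $m$, a cost of $O(\log\#Q+\log m)$ per $m$ symbols, which is what the compression surplus must absorb. Second, to guarantee that the selected stream is infinite and that the surplus is linear in $n$ you need the positive visit frequency of final states (Lemma~\ref{lem:nonzerofreq}), which you correctly anticipate. With Lemma~\ref{lem:onetoone} in hand, your compressor $C$ is the paper's chain $\mathcal{T}_1,\dots,\mathcal{T}_4$.

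For the dynamical route your plan is actually harder than the paper's at one point: you propose to identify the push-forward of the induced measure $\nu_V$ under the itinerary map as exactly the uniform Bernoulli measure, which would require showing that for each $k$ the number of reachable buffered final states carrying a given word $w\in A^k$ is independent of $w$. The paper sidesteps this via the Pyatetskii-Shapiro criterion: with the $k$-digit rolling buffer of Lemmas~\ref{lemma:transitivepiece} and~\ref{lemma:preservation}, the crude bounds $1\le\#\widetilde{C}_w\le\#F$ and $\#(Q'_k\cap F_k)\ge(\#A)^k$ already give the uniform constant $c=\#F$, and no exact equidistribution of buffer contents is needed. Likewise, the transfer of genericity across the unbounded return time that worries you is handled by the elementary ratio computation $\#\{0\le i\le i_m-1:\cdots\}/i_m$ times $i_m/m$, both factors converging by genericity of $(x,q_0)$ for the skew product (Theorem~\ref{thm:augment}); no bespoke induced-system argument is required once that genericity is established.
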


Regular here has the same meaning as in Agafonov's theorem. Saying that $L$ is a
group set implies that in the associated deterministic finite
automaton, any input will permute the states.

We note that if there exists a regular group set~$K$ such that the symmetric difference
$L \triangle K$ is finite, then the non-oblivious selection by~$L$ also
preserves normality.  More generally, if $L$ is accepted by an automaton
such that each recurrent strongly connected component (those components, which, once entered, cannot be
left) is a group automaton, then non-oblivious selection by~$L$ still
preserves normality.

We will prove this result using two distinct methods reflecting the
different styles of the two authors of this paper. 

The first method,
favored by the first author, makes use of the fact that normality can be
defined in terms of incompressibility by deterministic finite automata
\cite{BecherCartonHeiber15}. It follows along the
lines of the proof of Agafonov's theorem presented
in~\cite{BecherCarton18}.  One key ingredient of this proof is the
statement that the function which maps each sequence~$x$ to the pair
$(x \prefsel L, x \prefsel (A^* \setminus L))$ is one-to-one.  The same
statement for the non-oblivious selection does not hold, even when $L$ is a
group set as is shown by the following example.  Consider the set~$L$ of
words accepted by the automaton pictured in Figure~\ref{fig:group-autom}.
Let $x$ and $x'$ be the sequences $01^\mathbb{N}$ and $101^\mathbb{N}$.  It is
easily computed that $x \noprefsel L = x' \noprefsel L = 01^\mathbb{N}$ and
$x \noprefsel (A^* \setminus L) = x' \noprefsel (A^* \setminus L) =
1^\mathbb{N}$.

\begin{figure}[htbp]
  \begin{center}
  \begin{tikzpicture}[->,>=stealth',initial text=,semithick,auto,inner sep=1.5pt]
    \tikzstyle{every state}=[minimum size=0.4]
    \node[state,initial above] (q0) at (0,0) {$q_0$};
    \node[state,accepting]  (q1) at (2,0) {$q_1$};
    \node[state]  (q2) at (4,0) {$q_2$};
    \path (q0) edge[out=210,in=150,loop] node {$1$} ();
    \path (q0) edge[bend left=20] node {$0$} (q1);
    \path (q1) edge[bend left=20] node {$0$} (q0);
    \path (q1) edge[bend left=20] node {$1$} (q2);
    \path (q2) edge[bend left=20] node {$1$} (q1);
    \path (q2) edge[out=30,in=-30,loop] node {$0$} ();
  \end{tikzpicture}
  \end{center}
  \caption{A group automaton with a final state at $q_1$}
  \label{fig:group-autom}
\end{figure}
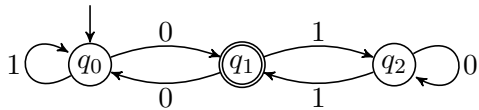

The second method, favored by the second
author, makes use of the
idea of augmented systems, dynamical systems which have been extended to
simultaneously act over a deterministic finite automaton \cite{HeersinkVandehey16,Vandehey17}. In this framework, the distinction between oblivious and non-oblivious selection is much smaller. We make use of a technique from \cite{Vandehey17a}, where we use an automaton that also records a finite number of  selected symbols: this reduces the problem of counting frequencies of words in $x\noprefsel L$ to the problem of calculating visiting frequencies of certain states in the automaton.

In Section \ref{sec:prelim}, we will give definitions, notation, and the necessary results from previous papers. In Section \ref{sec:method1}, we provide a proof of Theorem \ref{thm:main}, following the ideas of incompressibility. In Section \ref{sec:method2}, we provide a proof of Theorem \ref{thm:main}, following the ideas of augmented systems.

\section{Preliminaries} \label{sec:prelim}

\subsection{Sequences, words, and normality}

We write $\N=\{1,2,3,\dots\}$ for the set of all natural numbers.  An
\emph{alphabet}~$A$ is a finite set with at least two symbols.  We
respectively write $A^*$ and $A^{\N}$ for the set of all finite sequences
(also known as \emph{words}) and the set of all infinite sequences of
elements of~$A$ (which we will simply refer to as \emph{sequences}).  We also write $A^k$
stands for the set of all words of
length~$k$.  The length of a finite word~$w$ is denoted by~$|w|$. The empty word is denoted by $\emptyword$.  The
positions in finite and infinite sequences are numbered starting from~$1$.
For a word~$w$ and positions $1 \le i \le j \le |w|$, we let $w[i]$ and
$w[i..j]$ denote the symbol~$a_i$ at position~$i$ and the word
$a_ia_{i+1}\cdots a_j$ from position $i$ to position~$j$.  A set~$C$ of
words is called \emph{prefix-free} if for any $u,v \in C$ with $u$ being a
prefix of $v$ ($u=v[1..|u|]$), we have $u = v$.  We write $\log$ for the
base~$2$ logarithm.  For any finite set $S$ we denote its cardinality with
$\#S$. 
 
Let $x = a_1a_2a_3 \cdots $ be a sequence over the alphabet~$A$.  Let
$L \subseteq A^*$ be a set of words over~$A$.  As described in the introduction, the sequence obtained by
\emph{oblivious selection} of~$x$ by~$L$ is
$x \prefsel L = a_{i_1}a_{i_2}a_{i_3} \cdots$, where $i_1,i_2,i_3,\ldots$
is the enumeration in increasing order of all the integers~$i$ such that
the prefix $a_1 a_2 \cdots a_{i-1}$ belongs to~$L$. The sequence obtained by
\emph{non-oblivious selection} of $x$ by $L$ is
$x \noprefsel L = a_{i_1}a_{i_2}a_{i_3} \cdots$ where $i_1,i_2,i_3,\ldots$
is the enumeration in increasing order of all the integers~$i$ such that
the prefix $a_1 a_2a_3 \cdots a_i$ including $a_i$ belongs to~$L$.

We recall here the notion of normality.  We start with the notation for the
number of occurrences of a given word~$u$ within another word~$w$.  For $w$
and $u$ two words, the number $\occ{w}{u}$ of \emph{occurrences}
of~$u$ in~$w$ is given by $\occ{w}{u} = \#\{ i : w[i..i+|u|-1] = u \}$.
We say that $x\in A^\mathbb{N}$ is \emph{normal}  if for each word
$w \in A^*$, we have
\begin{displaymath}
  \lim_{n \to \infty} \frac{\occ{x[1..n]}{w}}{n} = (\#A)^{-|w|}.
\end{displaymath}
This differs from Borel's original definition~\cite{Borel09} of normality,
but is equivalent (see \cite[Sect.~7.3]{BecherCarton18}).

\subsection{Deterministic finite automata}

A \emph{deterministic finite automaton} is a tuple $\mathcal{T} =
\tuple{Q,A,\delta,I,F}$, where $Q$ is a finite set of \emph{states}, $A$ is
the input alphabet, $\delta: Q \times A \to Q$ is the \emph{transition}
function, and $I \subseteq Q$ and $F \subseteq Q$ are the sets of
\emph{initial} and \emph{final} states, respectively. We focus on automata
that operate in real-time, that is, they process exactly one input alphabet
symbol per transition.  Moreover, we will assume that there is a single
initial state, that is, $I$ is a singleton set.

The relation $\delta(p,a) = q$ is written $p \trans{a} q$ and we further
denote the sequence of consecutive transitions
\begin{displaymath}
  q_0 \trans{a_1} q_1 \trans{a_2} q_2\trans{}
  \cdots \trans{} q_{n-1} \trans{a_n} q_n
\end{displaymath}
by $q_0\trans{a_1a_2\dots a_n} q_n$.  A word $w=a_1a_2\dots a_n$ is said to
be accepted by an automaton if $q_0 \trans{w} q_n$, $q_0$ is the initial
state (that is, $I = \{ q_0 \}$), and $q_n$ is final (that is, in~$F$).  A
language $L\subset A^*$ is said to be regular (as seen in
Theorem~\ref{thm:agafonov}) if there exists a deterministic finite
automaton~$\mathcal{T}$ such that $w\in L$ if and only if $w$ is accepted
by $\mathcal{T}$.

We now introduce a classical class of regular sets called group sets (as
seen in Theorem \ref{thm:main}).  A \emph{group automaton} is a
deterministic automaton such that each symbol induces a permutation of the
states.  By inducing a permutation, we mean that, for each symbol~$a$, the
function which maps each state~$p$ to the state~$q$ such that $p \trans{a}
q$ is a permutation of the state set.  Put another way, if $p \trans{a} q$
and $p' \trans{a} q$ are two transitions of the automaton, then $p = p'$.
A regular set~$L \subseteq A^*$ is called a \emph{group set} if $L$ is
accepted by a group automaton.  It is well known that a regular set is a
group set if and only if its syntactic monoid is a group
\cite[Sect.~7.5]{Pin97}.

\subsection{Dynamical systems}

We will consider dynamical systems to consist of a tuple
$\mathcal{X}=\langle X,\mathcal{F},T,\mu\rangle$, where $X$ is a space, $\mathcal{F}$
is a $\sigma$-algebra on this space, $T:X\to X$ is a transformation (always assumed to be continuous with respect to $\mathcal{F}$), and $\mu$
is a measure on $\mathcal{F}$. We say that $T$ preserves the measure $\mu$,
or, equivalently, that $\mu$ is $T$-invariant, if $\mu(T^{-1}A)=\mu(A)$ for
all $A\in \mathcal{F}$. We say that a system is ergodic if
$\mu(T^{-1}A \triangle A)=0$ for some $A\in \mathcal{F}$ implies that
$\mu(A)=0$ or $\mu(X\setminus A)=0$. Ergodicity may be considered an indecomposability criterion for dynamical systems: a system is ergodic if it cannot be split into two large $T$-invariant pieces.

Of particular relevance to this paper is the symbolic shift system. Let
$X=A^\N$ be the space of all infinite words on the alphabet $A$. Let $T$
denote the forward shift on $X$, so that for a sequence $x\in X$,
$(Tx)_i=x_{i+1}$. Given a word $w\in A^*$, let $C_w$ denote the
cylinder set corresponding to $w$, so that $C_w$ consists of all $x\in X$
such that $x[1..|w|]=w$. The cylinder sets form a semi-algebra that
generates the canonical $\sigma$-algebra on $X$ and so if we let $\mu$ be a
measure on cylinder sets $C_w$ given by $\mu(C_w)=(\#A)^{-|w|}$, then this
extends to a measure on this $\sigma$-algebra. We note that if
$A=\{0,1,\dots,b-1\}$ and we use the standard bijection\footnote{Actually
  not quite a bijection, due to the phenomenon of $0.0\overline{9}=0.1$,
  but this will not be relevant and can be ignored.} from $X$ to $[0,1)$
associating an infinite word with a $b$-ary expansion, then $\mu$ is just the
Lebesgue measure.

We may reinterpret the definition of normality in a slightly more ergodic
manner. Since $x[i..i+|w|-1]=w$ if and only if $T^{i-1}x \in C_w$, we have
that a word $x\in X$ is normal if for each finite word $w\in A^*$, we have
\[
  \lim_{n\to\infty} \frac{\#\{0\le i \le n-1: T^i x\in C_w\}}{n}=\mu(C_w).
\]
Note that $T$ is ergodic and invariant with respect to~$\mu$, so by the
pointwise ergodic theorem, $\mu$-almost all $x\in X$ are normal.

By the Pyatetskii-Shapiro normality criterion \cite{MoshchevitinShkredov03}, we
can weaken the above to say that $x$ is normal if there exists a fixed
constant $c>0$ such that for every $w\in A^*$, we have
\[
  \limsup_{n\to\infty} \frac{\#\{0\le i \le n-1: T^i x\in C_w\}}{n}\le c\mu(C_w).
\]

\section{The first method of proof} \label{sec:method1}

We now introduce automata with output, also known as transducers, which are
used to compress sequences and to select symbols from a sequence.  In this
paper we only consider input-deterministic transducers (also known as
\emph{sequential}) computing functions from sequences to sequences.  Such a
machine is a deterministic automaton in which each transition is equipped
with an additional output word.  The output of a (infinite) run is the
concatenation of the outputs of the transitions used by the run. More
formally a \emph{transducer} is a tuple $\mathcal{T} =
\tuple{Q,A,B,\delta,I,F}$, where $Q$ is a finite set of \emph{states}, $A$
and $B$ are the input and output alphabets, respectively, $\delta: Q \times
A \to B^* \times Q$ is the \emph{transition} function, and $I \subseteq Q$
and $F \subseteq Q$ are the sets of \emph{initial} and \emph{final} states,
respectively.  The set~$I$ is again a singleton set.

The relation $\delta(p,a) = (w,q)$ is written $p \trans{a|w} q$ and the
tuple $\tuple{p,a,w,q}$ is then called a \emph{transition} of the
transducer.  A finite (respectively, infinite) \emph{run} is a finite
(respectively, infinite) sequence of consecutive transitions,
\begin{displaymath}
  q_0 \trans{a_1|v_1} q_1 \trans{a_2|v_2} q_2
  \trans{} \cdots \trans{} q_{n-1} \trans{a_n|v_n} q_n.
\end{displaymath}
Its \emph{input and output labels} are respectively $a_1\cdots a_n$ and
$v_1 \cdots v_n$.  A finite run is written $q_0 \trans{a_1\cdots a_n|v_1
  \cdots v_n} q_n$.  An infinite run is \emph{final} if the state~$q_n$ is
final for infinitely many integers~$n$.  In that case, the infinite run is
written $q_0 \trans{a_1a_2a_3\cdots|v_1v_2v_3\cdots} \limrun$.  An infinite
run is accepting if it is final and furthermore its first state~$q_0$ is
the initial one.  This is the classical B\"uchi acceptance condition
\cite{PerrinPin04}.  Since transducers are supposed to be input-deterministic,
there is at most one accepting run $q_0 \trans{x|y} \limrun$ having a given
sequence~$x$ for input label and we write $y = \mathcal{T}(x)$. 

A transducer is called \emph{one-to-one} if the function which maps $x$
to~$y$ is one-to-one.  We always assume that all transducers are
\emph{trim}: each state occurs in at least one accepting run.

A sequence $x = a_1 a_2 a_3 \cdots$ is \emph{compressible} by a
transducer~$\mathcal{T}$ if it has an accepting run
$q_0 \trans{a_1|v_1} q_1 \trans{a_2|v_2} q_2 \trans{a_3|v_3} q_3 \cdots $
satisfying
\begin{displaymath}
  \liminf_{n \to \infty} \frac{|v_1 v_2 \cdots v_n| \log \#B}{n \log \#A} < 1.
\end{displaymath}
Recall that each of the $v_i$'s belongs to $B^*$, not necessarily $B$, so could be empty or have length greater than $1$.

The connection between compressible sequences and normality is given by the following:

\begin{theorem} \label{thm:compress}
  A sequence is normal if and only if it not compressible by a one-to-one
  deterministic transducer.
\end{theorem}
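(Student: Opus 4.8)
I would prove the two directions of the equivalence separately: that a normal sequence cannot be compressed by a one-to-one transducer, and (contrapositively) that a sequence which is not normal can be so compressed. The common thread is that the compression ratio of a one-to-one transducer on a sequence is governed by an associated prefix-code encoding, to which the Kraft--McMillan and Gibbs inequalities apply.

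\emph{Normality implies incompressibility.} Assume $x$ is normal and let $\mathcal{T}$ be a one-to-one transducer with accepting run on $x$; I want to show the compression ratio is at least $1$. First I would prove an injectivity lemma: since $\mathcal{T}$ is trim and one-to-one, for every state $p$ and every $n$ the map sending $u\in A^{n}$ to the pair (state reached from $p$ on input $u$, output emitted) is injective --- otherwise, prepending a word from the initial state to $p$ and appending a word that makes the run final, both of which exist by trimness, would yield two distinct sequences with the same accepting output. Next I would reduce $\mathcal{T}$ to a prefix-code encoder: choose a final state $q^{*}$ occurring infinitely often, and with positive frequency, along the run on $x$ (first passing to the strongly connected component in which the run eventually remains); the first-return words to $q^{*}$ form a prefix code $R\subseteq A^{*}$, the run on $x$ eventually factors as $x=w_{1}w_{2}w_{3}\cdots$ with each $w_{i}\in R$, and the loops at $q^{*}$ define an encoding $\phi\colon R\to B^{*}$ that is uniquely decodable by the injectivity lemma. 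Finally I would compute the ratio. Because $x$ is normal, each $r\in R$ occurs in the factorization with frequency $(\#A)^{-|r|}$, so $\sum_{r}(\#A)^{-|r|}=1$, the mean block length is finite, and the compression ratio equals
\[
  \rho=\frac{\log\#B\cdot\sum_{r}(\#A)^{-|r|}\,|\phi(r)|}{\log\#A\cdot\sum_{r}(\#A)^{-|r|}\,|r|}.
\]
With $p_{r}=(\#A)^{-|r|}$ and $q_{r}=(\#B)^{-|\phi(r)|}$, the denominator is the Shannon entropy $\sum_{r}p_{r}\log(1/p_{r})$ of the probability vector $(p_{r})$, the numerator is $\sum_{r}p_{r}\log(1/q_{r})$, and Kraft--McMillan gives $\sum_{r}q_{r}\le 1$, so Gibbs' inequality yields $\rho\ge 1$. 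Hence $x$ is not compressible by $\mathcal{T}$.

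\emph{Non-normality implies compressibility.} Assume $x$ is not normal. By the classical equivalence of normality with simple normality in base $(\#A)^{k}$ for every $k$, there is a $k_{0}$ for which the aligned $k_{0}$-blocks of $x$ are not equidistributed, so along a subsequence their empirical distribution tends to a non-uniform $\nu_{0}$ on $A^{k_{0}}$ with Shannon entropy $H(\nu_{0})<k_{0}\log\#A-\delta$ for some fixed $\delta>0$. Grouping $m$ consecutive $k_{0}$-blocks and using subadditivity and concavity of entropy produces a subsequential distribution $\nu$ on $A^{k}$, $k=mk_{0}$, with $H(\nu)\le mH(\nu_{0})<k\log\#A-m\delta$; taking $m>1/\delta$ gives $H(\nu)<k\log\#A-1$. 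Then, with $B=\{0,1\}$, I would take a Huffman code $\phi\colon A^{k}\to B^{*}$ adapted to a slight perturbation of $\nu$ --- a genuine prefix-free, hence uniquely decodable, code with expected length under $\nu$ at most $H(\nu)+1<k\log\#A$ --- and let $\mathcal{T}$ read $x$ in aligned $k$-blocks and output $\phi$ of each block. This $\mathcal{T}$ is one-to-one, and along the chosen subsequence its output length per input symbol tends to a value $<\log\#A$, so its compression ratio on $x$ is below $1$ and $x$ is compressible.

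\emph{Main obstacle.} The hard part is the assertion used in the first implication that in the $R$-factorization of a normal sequence each $r\in R$ occurs with frequency exactly $(\#A)^{-|r|}$ --- equivalently, that whether a position is a block boundary (a property of the prefix) decorrelates, in limiting frequency, from the content of the ensuing $|r|$ symbols. I expect this to come out of normality by applying Agafonov's theorem (Theorem~\ref{thm:agafonov}) to the regular oblivious selection by a language of the form $R^{*}F$ with $F$ finite and unwinding definitions, but making it precise --- together with the surrounding bookkeeping: trimness, the B\"uchi acceptance condition, choosing $q^{*}$ with finite mean return time, controlling the output on prefixes ending in mid-block, and the entropy estimates in the converse --- is where essentially all the work lies. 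The second implication is comparatively routine once the quantitative entropy gap from non-normality is in hand.
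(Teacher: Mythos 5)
You should first know that the paper does not prove Theorem~\ref{thm:compress} at all: it is imported from the literature (Schnorr--Stimm, Dai et al., with a direct proof in Becher--Heiber), so your attempt can only be measured against those standard proofs. Your second direction (non-normal $\Rightarrow$ compressible) is essentially the standard aligned-block Huffman argument --- it is the content of Lemma~\ref{lem:huffman} as used in this paper --- and, modulo the usual care in choosing subsequences so that the $k_0$-block and $mk_0$-block empirical distributions converge simultaneously, it is sound.

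The first direction, however, has a genuine gap, and it is exactly the one you flag: the claim that for normal $x$ each first-return word $r$ to $q^*$ occurs in the block factorization with limiting frequency $(\#A)^{-|r|}$. Your proposed remedy, applying Theorem~\ref{thm:agafonov} to an oblivious selection by $R^*F$, does not deliver this: Agafonov's theorem only asserts that the selected subsequence is normal, i.e.\ it controls sliding-window factor frequencies \emph{relative to the number of selected symbols}; it gives neither the existence of the density of visits to $q^*$ nor the frequency of block-initial (aligned) occurrences of $r$ \emph{relative to the number of returns}, which is what your renewal computation needs. The statement you need is essentially the Schnorr--Stimm-type analysis that joint $(\text{state},\text{next $k$ symbols})$ frequencies converge to their stationary values in the recurrent component reached by the run --- a fact of the same depth as Agafonov's theorem itself, and the reason the standard proofs (e.g.\ Becher--Heiber) instead run a direct counting/entropy argument on fixed-length input blocks, using the injectivity of $u\mapsto(\text{end state},\text{output})$ that you correctly isolate. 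Two further points would also need repair: Kraft--McMillan requires the return code $\phi(R)$ to be uniquely decodable with nonempty codewords, and your injectivity lemma does not immediately give this (the two concatenations compared may have different input lengths, and you must separately rule out cycles with empty output, which does follow from one-to-one plus trim but needs an argument); and since $R$ is infinite, passing from per-block frequencies to the $\liminf$ of output length over prefixes of $x$ requires a truncation/uniform-integrability step --- note also that finiteness of the mean return time follows from Kac's formula for the finite-state chain, not from $\sum_{r}(\#A)^{-|r|}=1$ as written.
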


The above result follows from the results in~\cite{Dai04,SchnorrStimm72}.
A direct proof appears in~\cite{BecherHeiber13}.  Extensions of this
characterization for non-deterministic and extra memory transducers are
in~\cite{BecherCartonHeiber15,CartonHeiber15}.

Let $c : A^k \to A^*$ be a function mapping each word of length~$k$ to some
word.  This function can be extended to a function from $(A^k)^*$ to~$A^*$
by setting $c(w_1 \cdots w_n) = c(w_1) \cdots c(w_n)$ with $w_i \in A^k$ for
$1 \le i \le n$.  When a sequence~$x$ is not normal, it can be compressed
using a Huffman coding.  This is implicit in the following lemma.  The proof
of the next lemma is the first part of the proof of Lemma~7.5.1
in~\cite{BecherCarton18}.
\begin{lemma} \label{lem:huffman}
  If the sequence~$x$ is not normal, there is a length~$k$ and a one-to-one
  function $c : A^k \to C$ where $C$ is a prefix-free set such that
  \begin{displaymath}
    \liminf_{n \to \infty} \frac{|c(x[1..nk])|}{nk} < 1.
  \end{displaymath}
\end{lemma}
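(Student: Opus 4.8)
The plan is to show that a non-normal sequence $x$ can be compressed using a standard Huffman-type coding argument applied blockwise. The starting point is the fact that normality, as reformulated via the Pyatetskii--Shapiro criterion, fails for $x$ precisely when there is some word $w$ whose upper frequency exceeds $\mu(C_w)$ by a fixed multiplicative constant; but for the present lemma it is cleaner to work directly with $k$-blocks and the failure of the uniform-distribution condition on $A^k$.

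First I would fix notation: reading $x$ in consecutive blocks of length $k$ gives a sequence $w_1 w_2 w_3 \cdots$ with each $w_i \in A^k$, so $x[1..nk] = w_1 \cdots w_n$. The number of occurrences of a fixed block $u \in A^k$ among $w_1,\dots,w_n$ is, up to an error of at most $k$, at most $\occ{x[1..nk]}{u}$, and conversely these block-aligned counts relate to the shifted versions. The key point is that $x$ normal would force, for \emph{every} $k$, the block-frequencies $\frac{1}{n}\#\{i \le n : w_i = u\}$ to converge to $(\#A)^{-k}$ for all $u \in A^k$ (this is a consequence of normality, since each residue class of shifts contributes equally). Hence, since $x$ is \emph{not} normal, there exists some $k$ for which the empirical block-distribution $\pi_n$ on $A^k$ does not converge to the uniform distribution $\pi_{\mathrm{unif}}$; passing to a subsequence along which $\frac{1}{n}|c(x[1..nk])|$ is to be controlled, we may assume $\pi_{n_j} \to \pi$ for some probability distribution $\pi \ne \pi_{\mathrm{unif}}$ on $A^k$.

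Next I would invoke the source coding side. Let $c : A^k \to C$ be a binary (or $A$-ary) prefix-free code whose codeword lengths $\ell(u)$ are chosen, via Huffman/Kraft, so that $\sum_u (\#A)^{-\ell(u)} \le 1$ and $\ell(u) \approx -\log_{\#A} \pi(u)$ (rounded up), so that the expected codeword length under $\pi$ is at most $H_{\#A}(\pi) + 1 < k + 1$ — in fact, by choosing $k$ large and scaling (or by coding tuples of blocks) one arranges the per-symbol overhead to be negligible. The crucial inequality is the information-theoretic fact that $H_{\#A}(\pi) < k$ strictly whenever $\pi \ne \pi_{\mathrm{unif}}$, with a \emph{gap} $\delta = k - H_{\#A}(\pi) > 0$ depending only on $\pi$ and $k$, not on $n$. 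Since $\frac{1}{n}|c(x[1..nk])| = \sum_{u} \pi_n(u)\,\ell(u) \to \sum_u \pi(u)\,\ell(u) \le H_{\#A}(\pi) + 1$ along the subsequence $n_j$, for $k$ chosen large enough that $H_{\#A}(\pi) + 1 < k$ we obtain $\liminf_{n\to\infty} \frac{|c(x[1..nk])|}{nk} < 1$, as required. To be careful about the "$+1$" overhead one either blocks up further (code $m$-tuples of $k$-blocks, so overhead is $1/m$ per block and vanishes) or simply notes that the non-normality gap can be amplified: if $\pi \ne \pi_{\mathrm{unif}}$ then $H_{\#A}(\pi^{\otimes m}) = m\,H_{\#A}(\pi)$ while the alphabet $A^{km}$ has uniform entropy $km$, so the additive $+1$ is swamped.

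The main obstacle I expect is the bookkeeping translating "$x$ is not normal" (a statement about \emph{all} window positions, via $\occ{x[1..n]}{w}$) into "some block-aligned empirical distribution on $A^k$ is bounded away from uniform" — one must check that normality of $x$ really does imply convergence of block frequencies for every block length $k$, handling the $k$ different residue classes of starting positions and the $O(k)$ boundary corrections, and one must be careful that the failure is witnessed along a subsequence on which the liminf is actually realized (this is why the statement uses $\liminf$, not $\lim$). The coding/entropy step itself is the classical Huffman bound and is routine once the combinatorial setup is in place; since the lemma is quoted as being exactly the first half of the proof of Lemma~7.5.1 in~\cite{BecherCarton18}, I would ultimately present this as a short argument and defer the detailed block-frequency estimate to that reference.
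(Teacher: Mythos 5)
Your proposal is correct and takes essentially the same route as the paper, whose own ``proof'' is just an appeal to the first part of the proof of Lemma~7.5.1 in \cite{BecherCarton18}: non-normality gives a block length whose aligned empirical distribution on $A^k$ stays away from uniform along a subsequence, and a Shannon/Huffman prefix-free code adapted to the limit distribution then compresses, with the $+1$ per-block overhead absorbed by enlarging the block length so the entropy gap dominates. (One small imprecision: the aligned $km$-block empirical distribution is not literally $\pi^{\otimes m}$, but the required bound $H_{\#A} \le m\,H_{\#A}(\pi) \le mk - m\delta$ follows from subadditivity and concavity of entropy, so this is a routine fix.)
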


The next lemma states that if the input~$x$ is normal, each state
which is visited infinitely often in the run over~$x$ in some deterministic
automaton is visited more than a linear number of times.  This lemma is
actually Lemma~7.10.3 in~\cite{BecherCarton18}.
\begin{lemma} \label{lem:nonzerofreq}
  Let $x = a_0a_1a_2\cdots$ be a normal sequence and let $q_0 \trans{a_0} q_1
  \trans{a_1} q_2 \trans{a_2} \cdots$ be a run in a deterministic
  automaton.  If the state $q$ is visited infinitely often in this run, then $\liminf_{n
    \to \infty} \#\{i \le n : q_i = q \}/n > 0$.
\end{lemma}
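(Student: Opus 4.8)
The plan is to argue by contradiction: suppose some state~$q$ is visited infinitely often in the run $q_0 \trans{a_0} q_1 \trans{a_1} \cdots$ over the normal sequence~$x$, yet $\liminf_{n\to\infty} \#\{i \le n : q_i = q\}/n = 0$. From the hypothesis that $x$ is normal we will derive, via Theorem~\ref{thm:compress}, a one-to-one transducer that compresses~$x$, which is the desired contradiction. The intuition is that a state visited infinitely often but with vanishing density marks rare "checkpoints" along the run; between two consecutive visits to~$q$ the automaton reads a long block of symbols, and because these blocks are long and the automaton is deterministic, we can afford to spend the constant overhead needed to make the coding one-to-one (namely, recording which transitions of the automaton were used) while still beating the trivial rate on a positive-density-free subsequence of prefixes.

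The first step is to set up the compressing transducer. Since $x$ is normal it is in particular not compressible, so by Theorem~\ref{thm:compress} it suffices to exhibit a one-to-one transducer that compresses it; equivalently, working at the level of Lemma~\ref{lem:huffman}, it is enough to build a one-to-one map on blocks whose image lies in a prefix-free set and whose length ratio has $\liminf < 1$ along the same subsequence on which visits to~$q$ are sparse. I would take the transducer to simulate the given automaton deterministically, so that after reading $a_0 \cdots a_{n-1}$ it knows the current state~$q_n$; whenever it is in state~$q$ it "flushes" the block of input read since the previous visit to~$q$ using a compressed encoding, and otherwise it buffers. Because the transducer reconstructs the automaton's run from the state it already knows, the encoding of each block only needs to specify the sequence of input symbols, and a short synchronization marker suffices for one-to-one-ness since the blocks between visits to~$q$ are prefix-delimited by the event "return to~$q$".

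The second, and main, step is the counting estimate that makes the rate drop below~$1$. Here I would use that on any prefix $x[1..n]$ with few visits to~$q$, almost all of the prefix is covered by at most $\#\{i \le n : q_i = q\} + O(1)$ long inter-visit blocks. On each such block we do not actually gain compression for free — the point is subtler, and this is where I expect the real obstacle to lie: we need the input, restricted away from~$q$, to still look like a normal-ish sequence so that some genuine statistical redundancy can be harvested, or else we need an entirely different mechanism. The cleanest route is probably the one taken in Becher--Carton: assume the $\liminf$ of the density is $0$, pick a subsequence $n_1 < n_2 < \cdots$ along which $\#\{i \le n_j : q_i = q\}/n_j \to 0$, and on the prefix $x[1..n_j]$ encode the run by recording (a) the positions and (b) the identity of the symbols read at the sparse visit-times to~$q$ in a naive way, and (c) the remaining symbols verbatim — then observe that determinism lets us drop the information at (c) that is forced, but since between visits to~$q$ the automaton may traverse arbitrary states there is nothing forced, so instead one records at each step only the state reached, which is cheaper than recording the symbol precisely when the in-degree under the relevant letter exceeds~$1$. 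Since $x$ is normal, the empirical distribution of letters is uniform, so for each state the letters that lead into it are used with their expected frequencies, and a Huffman-type encoding of "which incoming letter" beats $\log \#A$ bits per step by a fixed positive amount on every prefix — while the extra bookkeeping at the sparse $q$-visits costs only $o(n_j)$ bits. Combining these gives $\liminf |c(x[1..n_j])|/n_j < \log\#A$, i.e. the normalized ratio is $<1$, contradicting Theorem~\ref{thm:compress}. Making the encoding literally one-to-one (handling the buffered tail when $n_j$ falls in the middle of a block, and ensuring the image is prefix-free) is routine but fiddly, and is the kind of detail I would relegate to a citation of Lemma~7.10.3 in~\cite{BecherCarton18}; indeed, since the statement is quoted verbatim as that lemma, the honest proof here is simply to invoke it.
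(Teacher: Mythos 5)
The compression route you sketch does not work, and the step you yourself flag as the ``real obstacle'' is where it breaks. You assume the visits to $q$ have density tending to $0$ along a subsequence and then try to win by encoding the long inter-visit blocks at fewer than $\log\#A$ bits per symbol. But those blocks are stretches of a normal sequence, and the sparsity of visits to $q$ gives no statistical redundancy whatsoever to harvest from them. Your specific mechanism is backwards: if the letters of $x$ have uniform empirical frequencies, a Huffman-type code for ``which incoming letter'' gains nothing --- uniform frequencies are precisely the incompressible case, and this is the content of Theorem~\ref{thm:compress} itself. Likewise, recording the state reached instead of the symbol read is not one-to-one (a deterministic automaton may have $\delta(p,a)=\delta(p,b)$ with $a\neq b$), and even when it is injective it costs up to $\log\#Q$ bits, which need not beat $\log\#A$. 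So the contradiction-via-compressibility setup, as described, has no mechanism that produces a ratio strictly below $1$. Your closing fallback --- simply invoking Lemma~7.10.3 of \cite{BecherCarton18} --- is indeed exactly what the paper does (it gives no proof and cites that lemma), but the argument you build around it is not a proof.

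If you want a self-contained argument, the standard one uses normality positively rather than through compression. Let $S$ be the set of states visited infinitely often; after a finite prefix the run stays in $S$, and for any $p,p'\in S$ the run returns from $p$ to $p'$, so for each $p\in S$ there is a word $u_p$ with $p \trans{u_p} q$ (with $q\in S$ by hypothesis). Chaining these words state by state (append, for each successive state of $S$, the word sending its current image to $q$) produces a single word $w$ such that reading $w$ from \emph{any} state of $S$ passes through $q$. Normality of $x$ gives $\occ{x[1..n]}{w} \sim (\#A)^{-|w|}\,n$, every occurrence of $w$ beyond the transient forces a visit to $q$ within a window of length $|w|$, and a given visit can be charged to at most $|w|+1$ occurrences; hence $\liminf_{n\to\infty} \#\{i\le n : q_i=q\}/n \ge (\#A)^{-|w|}/(|w|+1) > 0$. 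This is essentially the proof of the cited Lemma~7.10.3, and it is much shorter than any transducer construction.
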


A transducer can have two output tapes. In this case, the transducer would be a tuple $\mathcal{T}=\langle Q,A,B,\delta,I,F\rangle$, where now the transition function is $\delta:Q\times A\to B^*\times B^*\times Q$. The relation $\delta(p,a)=(w_1,w_2,q)$ is written $p\trans{a|w_1,w_2}q$.

A deterministic automaton~$\mathcal{A}$ accepting a set~$L$ can can be turned
into a two-output transducer~$\mathcal{T}$ that outputs $x \noprefsel L$
and $x \noprefsel (A^* \setminus L)$ on its first and second output tapes
respectively.  Each transition $p \trans{a} q$ of~$\mathcal{A}$ is replaced
by either the transition $p \trans{a|a,\emptyword} q$ if the state~$q$ is
final or by the transition $p \trans{a|\emptyword,a} q$ if $q$ is not
final. If $q$ is final, any finite run in~$\mathcal{A}$ from the initial
state to~$q$ is accepting and therefore the label~$a$ of its last
transition must be output to the first output tape because it is selected
in $x \noprefsel L$.  The following lemma states the key property of this
transducer~$\mathcal{T}$ when $\mathcal{A}$ is a group automaton.

\begin{lemma} \label{lem:onetoone}
  Let $\mathcal{A}$ be a group automaton and let $\mathcal{T}$ the
  transducer obtained from~$\mathcal{A}$ as above.  The function
  which maps each finite run $p \trans{u|v,w} q$ of~$\mathcal{T}$ to
  the triple $\tuple{q,v,w}$ is one-to-one.
\end{lemma}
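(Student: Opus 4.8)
The statement to prove is that the map sending a finite run $p \trans{u|v,w} q$ of $\mathcal{T}$ to the triple $\tuple{q,v,w}$ is one-to-one. Since $\mathcal{T}$ is input-deterministic, a run is entirely determined by its starting state $p$ and its input label $u$; so it suffices to recover $p$ and $u$ from $\tuple{q,v,w}$. The plan is to reconstruct $u$ from $v$ and $w$ by a simultaneous induction that "merges" the two output streams back into the single input stream, while tracking the state, and then use the group property to run the automaton backwards and recover $p$ from $q$.

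\textbf{Proof plan.} Since $\mathcal{T}$ is input-deterministic, a finite run is completely determined by its starting state together with its input label. Hence it suffices to recover, from a triple $\tuple{q,v,w}$, both the initial state $p$ and the input word $u$ of any run $p \trans{u|v,w} q$ of~$\mathcal{T}$ that produces this triple. I would do this by reconstructing the run \emph{backwards}, one transition at a time.

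I would first record two structural facts about $\mathcal{T}$. Each transition of~$\mathcal{T}$ reads one input symbol and writes exactly that symbol to exactly one of the two output tapes, so every run $p \trans{u|v,w} q$ satisfies $|u| = |v| + |w|$, and in particular has exactly $|v| + |w|$ transitions. Moreover, by the construction of~$\mathcal{T}$ from~$\mathcal{A}$, a transition writes its symbol to the first tape if and only if the state it leads to is final. Consequently, if $u \neq \emptyword$ the last transition of the run is $q' \trans{a|a,\emptyword} q$ when $q \in F$ and $q' \trans{a|\emptyword,a} q$ when $q \notin F$; in either case the symbol~$a$ is the last letter of~$v$ (if $q \in F$) or of~$w$ (if $q \notin F$), and that tape is nonempty precisely because the last transition wrote to it.

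The reconstruction then proceeds by induction on $|v| + |w|$. If $v = w = \emptyword$, the run is the empty run at~$q$, so $p = q$ and $u = \emptyword$. Otherwise, the finality of~$q$ determines which tape received the last output symbol~$a$, and $a$ is read off as the last letter of that tape. Here the group hypothesis on~$\mathcal{A}$ is used: since the map $r \mapsto \delta(r,a)$ is a permutation of~$Q$, there is a \emph{unique} state~$q'$ with $q' \trans{a} q$, giving the unique candidate for the state preceding the last transition. Deleting the last letter~$a$ from the appropriate tape yields a triple $\tuple{q',v',w'}$ with $|v'| + |w'| = |v| + |w| - 1$, and the run obtained by removing the last transition is a run realizing exactly this triple; by the induction hypothesis it is uniquely determined, and appending the transition $q' \trans{a} q$ reconstitutes the original run. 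Thus $(p,u)$, and hence the run, is uniquely determined by $\tuple{q,v,w}$.

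The argument has no serious obstacle once the two structural facts are isolated; the single place where the hypotheses genuinely enter is the backward step, where the group property of~$\mathcal{A}$ supplies a \emph{unique} predecessor state for the last transition. This also explains why the ending state~$q$ cannot be dropped from the triple: the pair $(v,w)$ alone does not determine the run, since already the runs of~$\mathcal{T}$ on the prefixes~$01$ and~$10$ of the sequences in the introductory example both start at~$q_0$ and produce the same pair of outputs~$(0,1)$ while ending at the distinct states~$q_2$ and~$q_1$.
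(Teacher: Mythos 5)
Your proof is correct and follows essentially the same route as the paper's: induction on $|v|+|w|$, with the finality of $q$ determining which tape holds the last output symbol, and the group property of $\mathcal{A}$ supplying the unique predecessor state for the last transition before applying the induction hypothesis to the shortened triple. The closing remark explaining why the end state $q$ must be kept in the triple (via the runs on $01$ and $10$ in the introductory example) is a correct and pleasant addition, though not needed for the lemma itself.
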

\begin{proof}
  Note first that $|u| = |v| + |w|$.  The proof is done by induction on the
  sum $|v| + |w|$. We will, throughout the proof, denote the run $p \trans{u|v,w} q$ by $\rho$.
  
  If $v$ and~$w$ are empty, $u$ is also empty and so
  $\rho$ must be the empty run from $p = q$ to~$p$.  Suppose now that
  $|v| + |w| > 0$.  Suppose that $q$ is final in the
  automaton~$\mathcal{A}$.  All transitions ending in~$q$ have the form
  $r \trans{a|a,\emptyword} q$ for some state~$r$ and some symbol~$a$.  It
  follows that $v$ cannot be empty.  Let $a$ be the last symbol of~$v$ and
  let $v'$ be such that $v = v'a$.  Since $\mathcal{A}$ is a group
  automaton, there is exactly one transition $r \trans{a} q$ ending in~$q$
  and having $a$ for label.  This implies that the last transition
  of~$\rho$ is $r \trans{a|a,\emptyword} q$. Applying the induction hypothesis to the triple
  $\tuple{r,v',w}$ completes the proof in this case. The case where $q$ is not
  final in~$\mathcal{A}$ works similarly.
\end{proof}

\begin{proof}[Proof of Theorem~\ref{thm:main}]
  Let $x$ be a normal word.  Let $L \subset A^*$ be a regular group set.
  We suppose that $x \noprefsel L$ is not normal and will show
  that $x$ can be compressed by a one-to-one deterministic transducer, contradicting its normality.
  
  First note that $x \noprefsel (A^* \setminus L)$ cannot be a finite word,
  as otherwise the normality of $x\noprefsel L$ is trivial.
  
  Let $\mathcal{A}$ be a group automaton accepting~$L$ whose state set and
  transition function are $Q$ and~$\delta$ respectively.  Let
  $\mathcal{T}_1$ be the two-output transducer obtained from~$\mathcal{A}$
  as above. The state set of~$\mathcal{T}_1$ is the same as the one
  of~$\mathcal{A}$ and the transition function~$\delta_1$
  of~$\mathcal{T}_1$ is defined as follows.
  \begin{displaymath}
    \delta_1(p,a) =
    \begin{cases}
      (a,\emptyword,\delta(p,a)) & \text{if }\delta(p,a) \in F \\
      (\emptyword,a,\delta(p,a)) & \text{if }\delta(p,a) \notin F.
    \end{cases}
  \end{displaymath}

  Since this transducer is assumed to be trim and since $x$ is assumed to
  be normal, it must reach final states infinitely often, showing that
  $x\noprefsel L$ is infinite. This is a consequence of
  \cite[Satz~2.5]{SchnorrStimm72}.
  
  Since it is supposed that $y = x \noprefsel L$ is not normal, there is,
  by Lemma~\ref{lem:huffman}, an integer~$k$ and a one-to-one function~$c$
  from $A^k$ into a prefix-free set~$C$ such that
  \begin{displaymath}
    \liminf_{n \to \infty} \frac{|c(y[1..nk])|}{nk} < 1.
  \end{displaymath}
  The transducer~$\mathcal{T}_1$ can be combined with the function~$c$ to
  get a new transducer~$\mathcal{T}_2$ which writes $c(y[1..nk])$ instead
  of $y[1..nk]$ on its first output tape.  This latter transducer has a
  buffer $\mathcal{B} = \mathcal{B}(k)$ of size~$k$ in which each symbol
  of~$y$ is put.  Whenever this buffer becomes full containing a word~$w$,
  the transducer~$\mathcal{T}_2$ writes $c(w)$ to its first output tape.
  $\mathcal{T}_2$ also behaves identically to $\mathcal{T}_1$ on its second
  output tape.  To be more precise, the state set of~$\mathcal{T}_2$ is $Q
  \times \mathcal{B}(k)$ where $\mathcal{B}(k) = \bigcup_{i=0}^{k-1} B^i$.
  The transition function~$\delta_2$ of~$\mathcal{T}_2$ is defined as
  follows.

  \begin{displaymath}
    \delta_2((p,w),a)=
    \begin{cases}
      (\emptyword,\emptyword, (\delta(p,a),wa))
        & \text{if }\delta(p,a)\in F \text{ and } |wa|<k,\\
      (c(wa),\emptyword,(\delta(p,a),\emptyword)),
        & \text{if }\delta(p,a)\in  F\text{ and } |wa|=k, \\
      (\emptyword,a,(\delta(p,a),w)), & \text{if }\delta(p,a) \notin F.
    \end{cases}
  \end{displaymath} 
  We claim that $\mathcal{T}_2$ compresses its input~$x$.  By this, we mean
  that if $u_n$ and~$v_n$ are the outputs of~$\mathcal{T}_2$ on its two
  tapes after $n$ transitions (consuming $n$ input symbols), then
  $\liminf_{n \to \infty}{(|u_n|+|v_n|)/n} < 1$.  The result is clear
  because the coding function~$c$ compresses the word~$y$ and because, by
  Lemma \ref{lem:nonzerofreq}, along the run of the
  automaton~$\mathcal{A}$, final states are visited at linearly many times.
  
  We now construct a new transducer~$\mathcal{T}_3$ which merges the
  contents of the two output tapes of~$\mathcal{T}_2$ into a single output.
  Let $m$ be a integer to be fixed later.  The transducer~$\mathcal{T}_3$
  has one output tape and two buffers $\mathcal{B}_1 = \mathcal{B}(m)$
  and~$\mathcal{B}_2 = \mathcal{B}(m)$ of size~$m$, constructed in a
  similar way to $\mathcal{T}_2$ above.  Whenever $\mathcal{T}_2$ writes a
  symbol to its first (respectively, second) output tape, this symbol is
  added to the buffer~$\mathcal{B}_1$ (respectively, $\mathcal{B}_2$).
  Whenever one buffer $\mathcal{B}_1$ or~$\mathcal{B}_2$ becomes full, its
  content is written to the output tape of~$\mathcal{T}_3$ with an extra
  bit in front of it to indicate whether the content comes from either
  $\mathcal{B}_1$ or~$\mathcal{B}_2$.  This extra bit is~$0$ if the content
  comes from~$\mathcal{B}_1$ and $1$ otherwise.  To be more precise, the
  state set of~$\mathcal{T}_3$ is $Q \times \mathcal{B}(k) \times
  \mathcal{B}(m) \times \mathcal{B}(m)$ and its transition
  function~$\delta_3$ is defined as follows.
  \begin{displaymath}
    \delta_3((p,w_1,w_2,w_3),a)=
    \begin{cases}
      (\emptyword,(\delta(p,a),w_1a,w_2,w_3)) \\
      \qquad \text{ if } \delta(p,a) \in F , |w_1a| < k \\
      (\emptyword,(\delta(p,a),\emptyword,w_2c(w_1a),w_3)) \\
      \qquad \text{ if } \delta(p,a) \in F , |w_1a| = k \\
      \qquad \quad |w_2c(w_1a)| < m \\
      (0u,(\delta(p,a),\emptyword,v,w_3)) \\
      \qquad \text{ if } \delta(p,a) \in F , |w_1a| = k \\
      \qquad \quad w_2c(w_1a) = uv \text{ where } |u| = m \\
      (\emptyword,(\delta(p,a),w_1,w_2,w_3a)) \\
      \qquad \text{ if } \delta(p,a) \notin F , |w_3a| < m \\
      (1w_3a,(\delta(p,a),w_1,w_2,\emptyword)) \\
      \qquad \text{ if } \delta(p,a) \notin F , |w_3a| = m \\
    \end{cases}
  \end{displaymath}
  We assume that $m$ is sufficiently large so that $|c(w)|\le m$ for any
  $w\in A^k$: this guarantees that in the third case above, we have
  $m\le w_2c(w_1a)<2m$ so $v$ is in $\mathcal{B}(m)$ as desired.  Moreover,
  we claim that, for $m$ great enough, the transducer~$\mathcal{T}_3$ also
  compresses its input~$x$.  Note that the output of~$\mathcal{T}_3$ is
  longer than the sum of the two outputs of~$\mathcal{T}_2$ because each
  block of~$m$ symbols is preceded by an extra bit $0$ or~$1$.  However,
  for $m$ great enough, this loss is offset by the compression
  of~$\mathcal{T}_2$, and indeed $\mathcal{T}_3$ compresses its input.
  
  Note that none of the transducers $\mathcal{T}_1$, $\mathcal{T}_2$, and
  $\mathcal{T}_3$ is one-to-one because the function which maps $x$ to the
  pair $(x \noprefsel L, x \noprefsel (A^* \setminus L))$ might not be
  one-to-one.  For that reason, we construct a last
  transducer~$\mathcal{T}_4$ obtained by changing $\mathcal{T}_3$ to make
  it one-to-one.  The transducer~$\mathcal{T}_4$ works as $\mathcal{T}_3$
  but whenever $\mathcal{T}_3$ writes to its output tape a block of
  length~$m$ coming from its buffer~$\mathcal{B}_1$ with its extra bit~$0$,
  the transducer~$\mathcal{T}_4$ also writes some extra information that we
  now describe.  This extra information is made of two data.  The first one
  is the current state $\delta(p,a)$ of the automaton~$\mathcal{A}$.  This
  other one is the length of the buffer~$\mathcal{B}_2$.  Both data are
  written in binary and require $\lceil \log \#Q \rceil$ and $ \lceil \log
  m \rceil$ bits respectively.  We do not give explicitly the transition
  function of~$\mathcal{T}_4$ as it is almost the same as the one
  of~$\mathcal{T}_3$.

  Since the additional information written by~$\mathcal{T}_4$ is $o(m)$,
  $\mathcal{T}_4$ still compresses its input for $m$ large enough.  We also
  claim that the transducer~$\mathcal{T}_4$ is one-to-one.  To do this, we
  show that, from the output of $\mathcal{T}_4$, it is possible to recover
  the input.
  
  Consider any block in the output which comes from $\mathcal{B}_1$
  (indicated by the first binary bit in front of it). If we take this block
  and all preceding blocks that arise from $\mathcal{B}_1$, apply $c^{-1}$
  to them, and concatenate them in order, we obtain a prefix $v$ of
  $x\noprefsel L$. In a similar way, if we take all preceding blocks that
  arise from $\mathcal{B}_2$ and the numbers of symbols of the next
  $\mathcal{B}_2$ block indicated by the second data, we obtain a prefix
  $w$ of $x\noprefsel (A^* \setminus L)$. (Recall our assumption that
  $x\noprefsel(A^*\setminus L)$ is infinite, so this next block must always
  exist.) Finally, the first data gives us a state~$q$ for our original
  automaton $\mathcal{A}$. From the way that $\mathcal{T}_3$ is
  constructed, this specific triple $\langle q,v,w\rangle$, represents the
  outputs and state reached by a run $q_0\trans{u|v,w} q$. By
  Lemma~\ref{lem:onetoone}, the triple $\langle q,v,w\rangle$ uniquely
  identifies this $u$ and means it must necessarily be a prefix of $x$.
  Since there are infinitely many blocks that come from $\mathcal{B}_1$,
  this gives us infinitely many prefixes of $x$ and so we know the entirety
  of $x$.
  
  
  
\end{proof}

\section{The second method of proof} \label{sec:method2}

We now consider what happens when we augment a dynamical system to
simultaneously run over a finite state automaton. Let us consider, as
before, a dynamical system $\langle X,\mathcal{F}, T,\mu\rangle $ and an automaton $\langle Q, A, \delta, I, F\rangle$. We will say this automaton is
transitive if given any $q_1,q_2\in Q$ there exists a word $w\in A^*$ such
that $q_1\trans{w} q_2$.

We now consider the following augmented dynamical system $\langle \widetilde{X},\widetilde{\mathcal{F}},\widetilde{T},\tilde{\mu}\rangle$:
\begin{enumerate}
\item $\widetilde{X}:=X\times Q$,
\item $\widetilde{T}:\widetilde{X}\to \widetilde{X}$ given by
  $\widetilde{T}(x,q)=(Tx,\delta(q,x_1))$, where $x=x_1x_2x_3\dots$,
\item Cylinder sets $C_{w,q}=C_w\times \{q\}$ for $w\in A^*$, $q\in Q$
  (noting again that these cylinder sets generate the $\sigma$-algebra
  $\widetilde{\mathcal{F}}$ on $\widetilde{X}$),
\item $\tilde{\mu}(C_{w,q})=(\#A)^{-|w|} / (\#Q)$.
\end{enumerate}

We extend our definition of normality on augmented systems:
$(x,p)\in\widetilde{X}$ is said to be normal  if for every $w\in A^*$ and $q\in Q$, we have
\[
\lim_{n\to\infty} \frac{\#\{0\le i \le n-1: T^i (x,p)\in C_{w,q}\}}{n}=\tilde\mu(C_{w,q}).
\]

We have then the following result:
\begin{theorem}\label{thm:augment}
  Suppose that the automaton $\langle Q,A,\delta,F\rangle$ is a transitive
  automaton. If $\widetilde{T}$ preserves the measure $\tilde\mu$, then $\widetilde{T}$ is ergodic.  Moreover, for any $x\in X$
  that is normal, the point $(x,q)$ is normal w.r.t $\tilde\mu$ for any
  $q\in Q$.
\end{theorem}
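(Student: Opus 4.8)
The plan is to prove Theorem~\ref{thm:augment} in two stages: first establish that $\widetilde{T}$ is ergodic with respect to~$\tilde\mu$, and then deduce the normality transfer statement as a consequence of ergodicity together with the Pyatetskii-Shapiro criterion. For the ergodicity part, I would argue by contradiction: suppose $E \subseteq \widetilde{X}$ is $\widetilde{T}$-invariant (up to $\tilde\mu$-null sets) with $0 < \tilde\mu(E) < 1$. Since $\widetilde{X} = X \times Q$ is a finite disjoint union of the ``fibers'' $X \times \{q\}$, write $E_q = \{x \in X : (x,q) \in E\}$, so that $\tilde\mu(E) = (\#Q)^{-1}\sum_q \mu(E_q)$. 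The invariance condition $\widetilde{T}^{-1}E = E$ unravels to a family of relations: $(x,q) \in E$ iff $(Tx, \delta(q,x_1)) \in E$. Reading this fiberwise, for each symbol $a$ the set $E_q \cap C_a$ is mapped by $T$ (restricted to $C_a$) into $E_{\delta(q,a)}$, and conversely; stitching these over all $a$ shows $T^{-1}E_{q'} = \bigcup_{a,\,\delta(q,a)=q'} (E_q \cap C_a)$ up to null sets. The key point I want to extract is that the $\mu$-measures $\mu(E_q)$ are all forced to be equal: using transitivity of the automaton, for any $q_1, q_2$ pick $w$ with $q_1 \trans{w} q_2$; then $T^{-|w|}$ applied appropriately, combined with the measure-preservation hypothesis on $\widetilde{T}$, shows $\mu(E_{q_1})$ and $\mu(E_{q_2})$ coincide (more carefully: $\tilde\mu$-preservation gives that the ``total mass'' is redistributed among fibers without loss, and transitivity prevents mass from being trapped in a proper subcollection of fibers). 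Hence $\mu(E_q) = \mu(E) =: \alpha$ for all $q$, with $0 < \alpha < 1$.

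Now I would push this back down to the base system. Define $F = \bigcup_q E_q \subseteq X$ — or better, to handle the fiber bookkeeping cleanly, observe that the relation ``$x \in E_q \iff Tx \in E_{\delta(q,x_1)}$'' together with $\mu(E_q) = \alpha$ for every $q$ should force each $E_q$ to be $T$-invariant modulo null sets, or at least force some nontrivial $T$-invariant set in $X$. Concretely: $T^{-1}E_{q'} = \bigsqcup_{a} (E_{\delta^{-1}(q',a)} \cap C_a)$ when $\delta(\cdot,a)$ is a bijection — but here the automaton need not be a group automaton, only transitive. So instead I would use the following averaging idea: consider the function $f = \sum_q \mathbf{1}_{E_q}$ on $X$; the invariance relations show that $f \circ T$ and $f$ are related by a substochastic shuffle of the indicator summands, and since all $\mu(E_q)$ are equal the only way to balance the ``incoming'' and ``outgoing'' mass at each fiber is for the configuration to be rigid. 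Cleaner still: build an invariant set in $X$ by $G = \{x : (x,q) \in E \text{ for all } q\}$ and $G' = \{x : (x,q)\in E \text{ for some }q\}$; invariance of $E$ makes both $T$-invariant (for $G'$: if some fiber over $x$ is in $E$, then tracking forward, some fiber over $Tx$ is in $E$, using that $\delta(q,x_1)$ ranges over a nonempty set), so by ergodicity of $(X,T,\mu)$ each of $G, G'$ is null or co-null. A short combinatorial argument with the equal-measures fact then squeezes $\tilde\mu(E) \in \{0,1\}$, the contradiction. I expect \emph{this} step — correctly exploiting transitivity (rather than the group property) to rule out mass being shared unevenly or trapped among fibers — to be the main obstacle, because without bijectivity of the symbol actions the fiberwise dynamics is only a substochastic correspondence, and one must be careful that measure-preservation of $\widetilde{T}$ is exactly the hypothesis that rescues this.

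For the second assertion, let $x \in X$ be normal and fix $q \in Q$; I want to show $(x,q)$ is normal for $\tilde\mu$. By the augmented-system version of the Pyatetskii-Shapiro criterion it suffices to find a constant $c$ with $\limsup_n \frac1n \#\{i < n : \widetilde{T}^i(x,q) \in C_{w,q'}\} \le c\,\tilde\mu(C_{w,q'})$ for all $w, q'$. But $\widetilde{T}^i(x,q) = (T^i x, \delta(q, x[1..i]))$, so $\widetilde{T}^i(x,q) \in C_{w,q'}$ requires in particular $T^i x \in C_w$, whence the count is bounded by $\#\{i<n : T^i x \in C_w\}$, which by normality of $x$ is $(\mu(C_w) + o(1))\,n = ((\#Q)\,\tilde\mu(C_{w,q'}) + o(1))\,n$. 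Taking $c = \#Q$ gives the Pyatetskii-Shapiro bound, so $(x,q)$ is normal. (One could alternatively invoke the pointwise ergodic theorem with the now-established ergodicity of $\widetilde{T}$ to get that $\tilde\mu$-a.e. point is normal, but that only yields a measure-theoretic statement; the direct Pyatetskii-Shapiro argument above gives the claim for \emph{every} normal $x$, which is what the theorem asserts.) I would write up the ergodicity step in full detail and keep the second step brief, as it is essentially a one-line domination plus a citation to the criterion already recalled in Section~\ref{sec:prelim}.
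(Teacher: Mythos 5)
Your attempt targets a statement that the paper itself does not prove: the authors simply cite Theorem~3.1 of \cite{Vandehey17} (and \cite{Vandehey14}), so any comparison is against that cited argument rather than an in-paper proof. Judged on its own terms, your sketch has a genuine gap at exactly the step you flag as the expected obstacle. The pieces you do establish are fine: writing $E_q=\{x:(x,q)\in E\}$, the invariance relation $x\in E_q \iff Tx\in E_{\delta(q,x_1)}$ gives $\mu(E_q)=\frac{1}{\#A}\sum_a \mu(E_{\delta(q,a)})$, and a maximum-principle argument for the irreducible stochastic matrix (this uses transitivity, not measure preservation) forces all $\mu(E_q)$ equal to some $\alpha\in(0,1)$; likewise $G=\bigcap_q E_q$ and $G'=\bigcup_q E_q$ are $T$-invariant mod null, so $\mu(G)=0$ and $\mu(G')=1$. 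But there is no ``short combinatorial argument'' that squeezes a contradiction out of these facts alone: they are mutually consistent with a nontrivial invariant set. Concretely, if there existed a measurable equivariant section $f:X\to Q$ with $f(Tx)=\delta(f(x),x_1)$ a.e., then $E=\{(x,f(x))\}$ would be invariant with $E_q=f^{-1}(q)$ of equal measure $1/\#Q$, union conull and intersection null --- precisely your configuration. Ruling out such sections (for the parity automaton this is the statement that the cocycle $x\mapsto x_1$ is not a coboundary over the Bernoulli shift) is the actual content of ergodicity here, and it cannot follow from ergodicity of the base plus your fiber bookkeeping; one must use the Bernoulli/Markov structure, e.g.\ observe that $(X\times Q,\widetilde{T},\tilde\mu)$ is isomorphic to the shift on the stationary Markov chain of pairs $(q_{n-1},x_n)$, which is irreducible by transitivity and double stochasticity (the latter being exactly the measure-preservation hypothesis) and hence ergodic, or use weak mixing of the base, or cite \cite{Vandehey17} as the paper does.

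The second half has a parallel problem: you invoke an ``augmented-system version of the Pyatetskii-Shapiro criterion,'' but the paper states that criterion only for the base full shift with the uniform measure, and no such version can be taken off the shelf. Your domination bound (frequency of $C_{w,q'}$ at most frequency of $C_w$, i.e.\ constant $c=\#Q$) uses neither transitivity nor invariance, so if the criterion held in the generality you use it, every normal $x$ paired with any state of an arbitrary automaton would be normal for $\tilde\mu$ --- which is false (e.g.\ an automaton with an absorbing state never visits the other states with frequency $1/\#Q$). So the hot-spot lemma for the augmented system is not automaton-independent; its validity must itself rest on invariance and ergodicity of $\widetilde{T}$ (your part one) together with a covering/genericity argument for the resulting Markov measure, none of which is supplied. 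As written, the second step is therefore either circular or incomplete, and this transfer of normality from $x$ to $(x,q)$ for \emph{every} normal $x$ is exactly the nontrivial content of the cited Theorem~3.1 of \cite{Vandehey17}.
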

This is a simplified verison of Theorem 3.1 in \cite{Vandehey17} (see
Remark 3.2 in that paper for the discussion of the necessary conditions
needed on the dynamic system and note that they are all trivial in our
case). See also \cite{Vandehey14} for a simpler proof.

We again want to consider adding a buffer to an automaton; however, unlike
in the previous section, we wish to consider a ``rolling" buffer, which
will continuously record the previous few inputs that caused us to reach a
final state without resetting itself to a shorter word.

For a given $k\in \mathbb{N}$ and automaton
$\langle Q, A, \delta, I, F\rangle$ as above, consider the automaton
\[
\left\langle Q_k=Q\times A^k, A, \delta_k,I_k, F_k=F\times A^k\right\rangle,
\]
where $\delta_k$ satisfies the following rules:
\begin{itemize}
\item If $\delta(q,a)$ is not final, then
  $\delta_k((q,w),a)=(\delta(q,a),w)$.
\item If $\delta(q,a)$ is final, then
  $\delta_k((q,w),a)=(\delta(q,a),w[2..k-1]a)$,
\end{itemize}
and $I_k=\{(q_0,w_0)\}$, where $w_0$ is any element of $A^k$. The choice of
which $w_0$ to use will not be relevant for any subsequent proofs.  We will
refer this new automaton as a $k$-digit buffer over the original automaton.

\begin{lemma}\label{lemma:transitivepiece}
  Let $x\in X$ be normal and $q\in Q_k$, and let
  $\langle Q_k, A,\delta_k,I_k,F_k\rangle$ be a $k$-digit buffer over a
  transitive group automaton. Then there is a subset $Q'\subseteq Q_k$ such
  that $F_k\cap Q'$ is non-empty, $\langle Q',A,\delta_k,I_k\cap Q',F_k\cap Q'\rangle$ is a transitive automaton
 , and $\widetilde{T}^i(x,q)$ will eventually always be in $Q'$ in its
  second coordinate.  Moreover, every word in $A^k$ will appear in the
  second coordinate of some element of $Q'$.
\end{lemma}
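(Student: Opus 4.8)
The plan is to take $Q'$ to be the set of states of the buffer automaton $\langle Q_k,A,\delta_k,I_k,F_k\rangle$ that are visited infinitely often by the run $\rho=(\rho_i)_{i\ge 0}$ of this automaton on the input $x$ started from $q$; the requirement that $\widetilde T^i(x,q)$ eventually always has its second coordinate in $Q'$ is then immediate, since only finitely many states are visited finitely often. Write $q=(\bar q,\bar w)$ with $\bar q\in Q$, so that the first coordinate of $\rho_i$ is $\delta(\bar q,x[1..i])$, and assume $F\ne\emptyset$ (the case $F=\emptyset$ being degenerate). The first key point is that, because the underlying automaton $\mathcal A=\langle Q,A,\delta,I,F\rangle$ is a \emph{group} automaton, $\widetilde T$ preserves $\tilde\mu$ on $X\times Q$: since each letter $a$ permutes $Q$, we have $\widetilde T^{-1}(C_{v,q'})=\bigsqcup_{a\in A}C_{av,q_a}$, where $q_a$ is the unique state with $\delta(q_a,a)=q'$, and summing the measures gives back $\tilde\mu(C_{v,q'})$. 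As $\mathcal A$ is moreover transitive, Theorem~\ref{thm:augment} applies to $\mathcal A$ and the normal point $x$, yielding for every $q'\in Q$ and $v\in A^*$
\[
  \lim_{n\to\infty}\frac{\#\{0\le i<n:\ \delta(\bar q,x[1..i])=q'\ \text{and}\ x[i+1..i+|v|]=v\}}{n}=\frac{(\#A)^{-|v|}}{\#Q}>0 .
\]
Taking $v=\lambda$ shows every state of $Q$ occurs infinitely often among the first coordinates of $\rho$, hence as the first coordinate of some state of $Q'$; since $F\ne\emptyset$, this gives $F_k\cap Q'\ne\emptyset$.

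Next I would check that $\langle Q',A,\delta_k,I_k\cap Q',F_k\cap Q'\rangle$ is a transitive automaton, i.e.\ that $Q'$ is closed under $\delta_k$ and strongly connected. Strong connectedness is the standard observation that, once $\rho$ has entered $Q'$ for good, any two states of $Q'$ are joined by a finite segment of $\rho$ lying entirely inside $Q'$. For closure, fix $(p,w)\in Q'$ and $a\in A$: the set of times $i$ with $\rho_i=(p,w)$ has the form $\{i:x[1..i]\in R\}$ for a regular language $R$, so the oblivious selection $x\prefsel R$ — the sequence of symbols read by $\rho$ from $(p,w)$ — is normal by Agafonov's theorem (Theorem~\ref{thm:agafonov}) and therefore contains the letter $a$ infinitely often; hence the transition out of $(p,w)$ labelled $a$ is used infinitely often and $\delta_k((p,w),a)\in Q'$.

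It remains to show that every $w=b_1\cdots b_k\in A^k$ appears as the second coordinate of some state of $Q'$. The combinatorial heart of this is the claim that every word of $A^k$ is a \emph{selection trace}: there are a state $q_1\in F$ and a word $u$ such that, reading $u$ from $q_1$ in $\mathcal A$, the subsequence of those read symbols whose reading lands in $F$ is exactly $b_1\cdots b_k$. Here the group hypothesis enters a second time: for $f\in F$ and a letter $b$, if $d\ge1$ is least with $\delta(f,b^d)\in F$ — such $d$ exists since the $b$-orbit of $f$ is finite and returns to $f\in F$ — then reading $b^d$ from $f$ produces exactly one selected symbol, namely $b$, and again ends in $F$; chaining this over $b_1,\dots,b_k$ yields $q_1$ and $u=b_1^{d_1}\cdots b_k^{d_k}$ with selection trace $b_1\cdots b_k$. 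Now apply the displayed identity with $q'=q_1$ and $v=u$: there are infinitely many $i$ at which the first coordinate of $\rho_i$ is $q_1$ and $x[i+1..i+|u|]=u$, and at each such $i$ the run $\rho$ reads $u$ from a state over $q_1$, performs exactly $k$ selections reading $b_1,\dots,b_k$ in this order, and so overwrites its length-$k$ buffer completely, making it equal to $b_1\cdots b_k$ at time $i+|u|$. Thus $\rho$ visits a state with second coordinate $w$ infinitely often, and that state lies in $Q'$.

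The step I expect to be the obstacle is this last one: realizing an \emph{arbitrary} $w\in A^k$ as a single block of $k$ consecutive selections with no spurious selections interleaved (the combinatorial claim, which the ``$b^d$'' orbit argument settles cleanly), and then implanting such a block inside the orbit of the normal sequence $x$ — the latter requiring the asymptotic independence of ``$\rho$ lies over $q_1$'' and ``the next input symbols spell $u$'', which is exactly what Theorem~\ref{thm:augment} provides once one notices it is legitimately applicable here because $\mathcal A$ is a transitive group automaton.
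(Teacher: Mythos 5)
Your proof is correct, and it coincides with the paper's on the decisive combinatorial step while being more self-contained elsewhere. The paper does not build $Q'$ by hand: it cites Lemma~4.1 of \cite{Vandehey17} for the existence of a transitive component $Q'\subseteq Q_k$ that eventually absorbs the orbit (and is closed under $\delta_k$), and then only adds two arguments: non-emptiness of $F_k\cap Q'$ is deduced purely combinatorially from transitivity of the base automaton (from any $(q_1,w)\in Q'$ one can reach any $q_2\in Q$ in the first coordinate without leaving $Q'$, so every state of $Q$, in particular a final one, occurs in $Q'$), and the coverage of all of $A^k$ is obtained by exactly your ``repeat $b$ until the next final state'' trick, chained over $b_1,\dots,b_k$ starting from a state of $F_k\cap Q'$ and staying inside $Q'$ by closure. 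You instead construct $Q'$ explicitly as the set of states visited infinitely often, proving closure via Agafonov's theorem (Theorem~\ref{thm:agafonov}) applied to the regular set of prefixes that lead to a given buffered state, and transitivity via segments of the tail of the run --- a clean, self-contained replacement for the citation. For non-emptiness and for implanting the word $b_1\cdots b_k$ you invoke Theorem~\ref{thm:augment} on the unbuffered augmented system (after checking measure preservation from the permutation property); this is valid but heavier than necessary, since once closure of $Q'$ and $F_k\cap Q'\neq\emptyset$ are in hand, simply following the transitions of $u=b_1^{d_1}\cdots b_k^{d_k}$ from a state of $F_k\cap Q'$ already lands in $Q'$, which is the paper's route. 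Finally, your remark that $F=\emptyset$ must be excluded is apt: both proofs (and the statement itself) implicitly assume $F\neq\emptyset$.
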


We are being somewhat imprecise about the initial states in the new automaton $\langle Q',A,\delta_k,I_k\cap Q',F_k\cap Q'\rangle$. It is possible that $I_k\cap Q'=\emptyset$. In this case we would replace $I_k\cap Q'$ with the first state in $Q'$ that the $\widetilde{T}$-orbit of $(x,q)$ enters---in essence, shifting everything forward.

\begin{proof}
  Most of this is proved in Lemma 4.1 of
  \cite{Vandehey17} with the exception the non-emptiness of $F_k\cap Q'$ and the last line. 
  
  Consider any $(q_1,w)\in Q'$ and $q_2\in Q$. Since $\langle Q,A,\delta,I,F\rangle$ is transitive, there exists another word, $u\in A^*$, so that $q_1\trans{u}q_2$ in this automaton. Moreover, since $\langle Q_k, A,\delta_k,I_k,F_k\rangle$ behaves the same as  $\langle Q,A,\delta,I,F\rangle$ in the first coordinate, we must have that $(q_1,w) \trans{u}(q_2,w_2)$ for some word $w_2\in A^k$. Since $q_2$ is an arbitrary element of $Q$, we have that every element of $Q$ appears in the first coordinate of $Q'$. In particular, $F_k\cap Q'$ is non-empty.

   We will now show that every word in $A^k$ will appear in the second coordinate of some element of $Q'$. 
   Note that if one starts at a state $(q,w)\in F_k\cap Q'$, it is always possible to reach the
  next final state in $F_k\cap Q'$ via any element in $A$. In particular, 
  $\langle Q,A,\delta,I,F\rangle$ itself is a group automaton and so the
  action of any input is to permute the states. Thus if one starts at $q$
  and keeps repeating the input $a_1$, one must eventually arrive at state $q_1\in F$. At worst the permutation that $a_1$ induces on the states $Q$ has $q$ being the only final state in its cycle. But even in this case, we would just have $q_1=q$. Thus, in our buffered automaton, we have that by inputting $a_1$ enough times we will move from $(q,w)$ to $(q_1,w[2..k]a_1)\in Q'$. We may repeat this process by inputting $a_2$ over and over until we reach a state $(q_2,w[3..k]a_1a_2)\in Q'$, and so on, until we reach $(q_k,a_1a_2\dots a_k)\in Q'$. But since the $a_i$'s are all arbitrary, one can force any desired word to appear in the second
  coordinate of $Q'$.
\end{proof}

\begin{lemma}\label{lemma:preservation}
  Suppose we augment $(X,T,\mu)$ with
  $\langle Q',A,\delta_k,I_k,F_k\rangle$ as defined in the previous
  lemma. Then $\widetilde{T}$ preserves the measure $\tilde{\mu}$.
\end{lemma}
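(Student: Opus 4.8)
The plan is to verify the measure-preservation condition $\tilde\mu(\widetilde T^{-1}E) = \tilde\mu(E)$ directly on the generating semi-algebra of cylinder sets $C_{w,(q,v)}$ with $w\in A^*$, $(q,v)\in Q'$, and then extend by the standard Carathéodory/$\pi$-system argument. Since $\tilde\mu$ is a product-type measure that assigns mass $(\#A)^{-|w|}/\#Q'$ to each such cylinder (after renormalizing to the state set $Q'$ rather than $Q$; I would note this normalization explicitly), it suffices to show that for every cylinder $C_{w,(q,v)}$ the preimage $\widetilde T^{-1}C_{w,(q,v)}$ is a finite disjoint union of cylinders whose measures sum to $(\#A)^{-|w|}/\#Q'$. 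First I would write out $\widetilde T^{-1}C_{w,(q,v)}$ explicitly: a point $(x,(p,u))$ lies in it iff $Tx\in C_w$ and $\delta_k((p,u),x_1)=(q,v)$, i.e. iff $x\in\bigcup_{a\in A} C_{aw}$ restricted to the starting symbol $a$, and $(p,u)$ is a $\delta_k$-predecessor of $(q,v)$ under input $a$. Hence $\widetilde T^{-1}C_{w,(q,v)} = \bigsqcup_{a\in A}\; \bigsqcup_{(p,u)\,:\,\delta_k((p,u),a)=(q,v)} C_{aw,(p,u)}$.

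The crux is therefore a counting statement: for each symbol $a\in A$, the number of states $(p,u)\in Q'$ with $\delta_k((p,u),a)=(q,v)$ is exactly one. This is where the group-automaton hypothesis does the work. In the underlying transitive group automaton, each symbol $a$ permutes $Q$, so there is a unique $p\in Q$ with $\delta(p,a)=q$. For the buffer coordinate: the rule for $\delta_k$ shows that, given $q$ (which tells us whether $q\in F$) and given $v$, the predecessor buffer $u$ is uniquely determined — if $q\notin F$ then $u=v$, and if $q\in F$ then $v=u[2..k]a$ forces $u[2..k]$, leaving only $u[1]$ undetermined, but that single symbol is pinned down because the predecessor state $(p,u)$ must itself lie in $Q'$ and, crucially, must be reachable — one uses the structure of $Q'$ as the eventual component, together with the fact (from the proof of Lemma~\ref{lemma:transitivepiece}) that transitions into any element of $Q'$ respect the group structure, so that at most one choice of $u[1]$ yields an element of $Q'$ mapping correctly. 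I would phrase this as: $\delta_k$ restricted to $Q'$ is still "deterministic backwards" because $Q'$ is a strongly connected component on which the buffer contents are consistent with the forward dynamics. Once uniqueness of the predecessor is established, the preimage is a disjoint union of exactly $\#A$ cylinders $C_{aw,(p_a,u_a)}$, each of $\tilde\mu$-measure $(\#A)^{-|w|-1}/\#Q'$, summing to $(\#A)^{-|w|}/\#Q'=\tilde\mu(C_{w,(q,v)})$, as required.

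The main obstacle I anticipate is precisely the uniqueness of the buffer-coordinate predecessor on $Q'$ in the final-state case: a priori a final state $(q,v)$ could have several predecessors $(p, bu')$ in the full automaton $Q_k$ differing in the first buffer symbol $b$, and one must argue that all but one of these fail to lie in $Q'$ (equivalently, that $Q'$ is "closed under the relevant backward moves" as a single recurrent class). The clean way around this is to observe that $\widetilde T$ restricted to $X\times Q'$ is a bijection: it is surjective onto $X\times Q'$ because $Q'$ is the recurrent component entered by orbits and every element of $Q'$ has a $\delta_k$-preimage in $Q'$ (by transitivity of $Q'$), and it is at most $\#A$-to-one with the fibers indexed by the first symbol $x_1$; combined with the fact that in a group automaton the backward map on states is a genuine function, this forces the predecessor within $Q'$ to be unique. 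Alternatively — and perhaps more cleanly for the write-up — one can bypass the explicit predecessor count by invoking that $\tilde\mu$ is, up to the normalization $\#Q/\#Q'$, the restriction of the measure on $X\times Q$ for the original (full) group automaton $Q_k$, for which measure-preservation of $\widetilde T$ is immediate since group automata have invertible transition structure; restricting an invariant measure to an invariant set (here $X\times Q'$, which is $\widetilde T$-invariant once the orbit has entered it) and renormalizing preserves invariance. I would present the direct cylinder computation as the main argument and mention the invariant-subset viewpoint as the conceptual reason it works.
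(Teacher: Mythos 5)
Your cylinder decomposition of $\widetilde{T}^{-1}C_{w,(q,v)}$ is the same starting point as the paper's (which works with inverse branches $\widetilde{T}^{-1}_a$ applied to sets $E_{q,w}\times\{(q,w)\}$), but your declared crux --- that for \emph{every} $a\in A$ there is exactly one $(p,u)\in Q'$ with $\delta_k((p,u),a)=(q,v)$ --- is false when $q\in F$, and the repair you sketch goes in the wrong direction. When the target's first coordinate is final, the transition rule forces the new buffer to end with the input symbol, so $(q,v)$ has \emph{no} predecessor at all under any $a\neq v[k]$; under the single admissible symbol $a=v[k]$ it has exactly $\#A$ predecessors, namely $(\delta^{-1}(q,v[k]),\,b\,v[1..k-1])$ for $b\in A$, and \emph{all} of them lie in $Q'$. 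Indeed, every state of $Q_k$ has in-degree exactly $\#A$ (one labelled edge per input symbol if its first coordinate is non-final; $\#A$ edges, all labelled $v[k]$, if it is final, by the group property of $\delta$), while $Q'$ is forward-closed with out-degree $\#A$; counting labelled edges gives $\#A\cdot\#Q'$ edges from $Q'$ into $Q'$, which saturates the total in-degree of $Q'$, so $Q'$ is backward-closed and no pruning of the lost buffer symbol $u[1]$ occurs. Thus your claim that membership in $Q'$ ``pins down $u[1]$'' cannot be salvaged; moreover your description of the preimage set itself is wrong in the final case, since all the cylinder words there begin with the same letter $v[k]$ rather than ranging over $a\in A$ (your total measure comes out right only by coincidence of counts). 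The correct bookkeeping, which is the paper's, is a trade-off: non-final targets have one predecessor for each of the $\#A$ input symbols, final targets have $\#A$ predecessors for the single admissible symbol, and in both cases the preimage is a union of $\#A$ cylinders of measure $(\#A)^{-|w|-1}/\#Q'$, giving $\tilde\mu(C_{w,(q,v)})$; the paper records this as $\mu(E_{q,w})/(\#A\,\#Q')$ per symbol in the non-final case, $\#A\cdot\mu(E_{q,w})/(\#A\,\#Q')$ for $a=w[k]$ when $q\in F$, and $0$ otherwise.

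Your fallback argument --- prove invariance on $X\times Q_k$ and restrict to the forward-invariant set $X\times Q'$ --- can be made rigorous (forward invariance plus global invariance already forces $\widetilde{T}^{-1}(X\times Q')\setminus(X\times Q')$ to be $\tilde\mu$-null, so the normalized restriction is preserved), but not for the reason you give: $Q_k$ is \emph{not} a group automaton, since the buffer forgets its first symbol whenever a final state is entered, so $\delta_k(\cdot,a)$ is not a permutation of $Q_k$ and invariance upstairs is not ``immediate from invertible transition structure.'' Establishing it requires exactly the in-degree-$\#A$ count above, i.e.\ the same final-state computation your main argument gets wrong.
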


We will use a similar method to the proofs seen in
\cite{Vandehey17a}.

\begin{proof}
  Any set $E\subset \widetilde{X}$ can be decomposed as as
  $E=\bigcup_{(q,w)\in Q'} E_{q,w}\times \{(q,w)\}$. Since the sets
  $\widetilde{T}^{-1}\left( E_{q,w}\times\{(q,w)\}\right)$ are all
  disjoint, if we can show that $\widetilde{T}$ preserves the
  $\tilde\mu$-measure of sets of the form $E_{q,w}\times \{(q,w)\}$ then it
  will follow that $\widetilde{T}$ preserves the $\tilde\mu$-measure of
  $E$, and we are done.

  To prove this, consider the inverse branches $T^{-1}_a$, $a\in A$, of
  $T$, such that $T^{-1}_a x=ax$. We may then likewise decompose
  $\widetilde{T}^{-1}$ into branches $\widetilde{T}^{-1}_a$, $a\in A$, such
  that $\widetilde{T}^{-1}_a$ induces the branch $T^{-1}_a$ in the first
  coordinate. We may then analyze exactly how $\widetilde{T}^{-1}_a$ acts:
  in particular, $\widetilde{T}^{-1}_a(x,(q,w))$ equals
\[
\begin{cases}
(ax,(\delta^{-1}(q,a),w)), & \text{ if } q\not\in F,\\
(ax,(\delta^{-1}(q,a),Aw[1..k-1])), & \text{ if } q\in F \text{ and }w[k]=a\\
\emptyset, & \text{ if } q\in F \text{ and }w[k]\neq a
\end{cases}
\]
where $\delta^{-1}(q,a)$ is the unique state $p\in Q$ such that $\delta(p,a)=q$. (Unique due to $\langle Q,A,\delta,F\rangle$ being a group automaton.)

Since $\mu(aE_{q,w})=\mu(E_{q,w})/(\#A)$ and since for any subset $Y\subset X$ we have that $\tilde\mu(Y\times \{(q,w)\})=\mu(Y)/(\#Q')$, we therefore have that $\tilde{\mu}(\widetilde{T}^{-1}_a( E_{q,w}\times \{(q,w)\}))$ equals
\[
\begin{cases}
\frac{\mu(E_{q,w})}{\#A}\times \frac{1}{\#Q'}, & \text{ if }q\not\in F,\\
\frac{\mu(E_{q,w})}{\#A}\times \frac{\#A}{\#Q'}, & \text{ if } q\in F \text{ and }w[k]=a,\\
0, & \text{ if } q\in F \text{ and }w[k]\neq a.
\end{cases}
\]
By summing over all the $a$'s, we see that 
\[
\tilde\mu(\widetilde{T}^{-1}( E_{q,w}\times \{(q,w)\})) = \frac{\mu(E_{q,w})}{\#A}\times \frac{\#A}{\#Q'}= \frac{\mu(E_{q,w})}{\#Q'} = \tilde\mu(E_{q,w}\times \{(q,w)\})
\]
in all cases, which completes the proof.
\end{proof}

\begin{proof}[Second proof of Theorem~\ref{thm:main}]
 Let $x=a_1a_2a_3\dots \in X$ be normal and let $y=x\noprefsel L=b_1b_2b_3\dots$. Let $\langle Q,A,\delta,I,F\rangle$ denote a group automaton that accepts $L$.
 
 Consider any finite word $w\in A^*$ with length $k=|w|$. By the Pyatetskii-Shapiro normality criterion, we want to show that there exists a uniform $c>0$ (independent of our choice of $w$) such that
 \[
 \limsup_{m\to \infty} \frac{\#\{0\le i \le m-1: T^i y \in C_w\}}{m}\le c\mu(C_w).
 \]
 
 We can analyze how often $w$ appears in $y$ by analyzing the behavior of $x$ when lifted to a augmented system with the $k$-digit buffer $\langle Q_k, A,\delta_k, I_k,F_k\rangle$. The particular lift we choose is $\tilde{x}=(x,(q_0,w_0))$. Let $i_1,i_2,i_3,\dots$ be the increasing sequence of indices $i$ such that $\widetilde{T}^i \tilde{x} \in X\times F_k$, \emph{starting with the $k$th such index}. Then if we let $\pi:\widetilde{X}\to A^k$ be the projection onto the length-$k$ word contained in the second coordinate, then we see that $y[j..j+k-1]=\pi(\widetilde{T}^{i_j}\tilde{x})$.
 
 By Lemma \ref{lemma:transitivepiece}, we know there is a subset $Q'_k\subseteq Q_k$ such that eventually the orbit $\widetilde{T}^i \tilde{x}$ will always be in $X\times Q'_k$. By, as necessary, ignoring a finite piece of $x$, we may assume that we are always in $X\times Q'_k$. Since by Lemma \ref{lemma:preservation}, $\widetilde{T}$ preserves the measure $\tilde{\mu}$, it must also preserve the measure of $\tilde{\mu}$ when restricted to $X\times Q'_k$. Therefore by Theorem \ref{thm:augment}, $\widetilde{T}$ restricted to $X\times Q'_k$ is ergodic and invariant with respect to the restriction of $\tilde{\mu}$.
 
 Let $\widetilde{C}_w$ denote the subset of $Q'_k\cap F_k$ such that the second coordinate is $w$. By the last part of Lemma \ref{lemma:transitivepiece}, this is always non-empty.
 
 Then, with this definition, we have that 
 \begin{align*}
 &\limsup_{m\to \infty} \frac{\#\{0\le i \le m-1: T^i y \in C_w\}}{m}\\ &\qquad = \limsup_{m\to \infty} \frac{\#\{0\le i \le i_m-1: \widetilde{T}^i \tilde{x} \in X\times \widetilde{C}_w\}}{m}
 \\ &\qquad = \limsup_{m\to \infty} \frac{\#\{0\le i \le i_m-1: \widetilde{T}^i \tilde{x} \in X\times \widetilde{C}_w\}}{i_m}\cdot \left(\frac{m}{i_m}\right)^{-1}.
 \end{align*}
 We specified $X\times \widetilde{C}_w$ rather than $\pi^{-1}(w)$ so that the numerator is forced to only count among those indices $i_j$ rather than among all indices $i$.
 
According to Theorem \ref{thm:augment}, $\tilde{x}$ is normal with respect to the restriction of $\widetilde{T}$ to $X\times Q'_k$. Therefore as $m$ tends to infinity we have that
\[
\frac{\#\{0\le i \le i_m-1: \widetilde{T}^i \tilde{x} \in X\times \widetilde{C}_w\}}{i_m}
\]
converges to $\tilde\mu(X\times \widetilde{C}_w)/\tilde\mu(X\times Q'_k)$. Likewise, since we can write $m$ as $\#\{0\le i \le i_m-1: \widetilde{T}^i \tilde{x}\in X\times (Q'_k\cap F_k)\}$, we have that as $m$ tends to infinity, $m/i_m$ converges to $\tilde\mu(X\times (Q'_k\cap F_k))/\tilde\mu(X\times Q'_k)$. Thus,
\[
\limsup_{m\to \infty} \frac{\#\{0\le i \le m-1: T^i y \in C_w\}}{m} = \frac{\tilde\mu(X\times \widetilde{C}_w)}{\tilde\mu(X\times (Q'_k\cap F_k))}.
\]
But, by construction, for any set $E\subset Q_k$, we have that $\tilde\mu(X\times E)=\#E/\#Q_k$. Thus, this limsup is equal to
\[
\frac{\#\widetilde{C}_w}{\#(Q'_k\cap F_k)}.
\]
We want to obtain a crude upper bound on this last fraction. We know that $\widetilde{C}_w\subseteq F\times \{w\}$, so $\#\widetilde{C}_w\le \#F$. Moreover, by the final part of Lemma \ref{lemma:transitivepiece}, we know that $\#(Q'_k\cap F_k)\ge b^k$. Thus,
\[
\limsup_{m\to \infty} \frac{\#\{0\le i \le m-1: T^i y \in C_w\}}{m}\le \frac{\#F}{b^k}.
\]
Setting $c=\#F$ completes the proof.
\end{proof}

\end{document}